\documentclass{article}

\usepackage[dvipsnames]{xcolor}
\usepackage{microtype}
\usepackage{enumerate}
\usepackage{graphicx}
\usepackage{balance}
\usepackage{caption}
\usepackage{bm}
\usepackage{subcaption}
\usepackage{booktabs} 
\usepackage{amsmath}
\usepackage{amssymb}
\usepackage{mathtools}
\usepackage{amsthm}
\usepackage{paralist}
\usepackage{enumitem}
\usepackage{tabularx}


\usepackage{hyperref}



 \usepackage[accepted]{icml2024}

\newif\ifnotes
\notestrue
\ifnotes
\newcommand{\sam}[1]{{\ifnotes \textcolor{Thistle}{\scriptsize Sam: {#1}} \fi}}
\newcommand{\shikha}[1]{\ifnotes {\noindent \scriptsize  \textcolor{blue} {Shikha: {#1}}} \fi{}}
\newcommand{\ben}[1]{{\ifnotes \scriptsize \textcolor{orange}{Ben: {#1}} \fi}}
\newcommand{\aidin}[1]{{\ifnotes \scriptsize \textcolor{red}{Aidin: {#1}} \fi}}

\else
\newcommand{\sam}[1]{}
\newcommand{\shikha}[1]{}
\newcommand{\ben}[1]{}
\newcommand{\aidin}[1]{}

\fi

\theoremstyle{plain}
\newtheorem{theorem}{Theorem}[section]

\newtheorem{lemma}[theorem]{Lemma}
\newtheorem{invariant}[theorem]{Invariant}

\theoremstyle{definition}

\theoremstyle{remark}



\renewcommand{\tilde}{\widetilde}
\newcommand{\heta}{\hat{\eta}}
\newcommand{\calH}{\mathcal{H}}
\newcommand{\anp}{\tilde{\alpha}}
\newcommand{\dep}{\tilde{\delta}}

\newcommand{\defn}{\emph}
\newcommand{\etamax}{\eta}
\renewcommand{\epsilon}{\varepsilon}
\renewcommand{\defn}[1]{\emph{\textbf{#1}}}


\renewcommand{\paragraph}[1]{
\noindent{\textbf{#1}}~}

\begin{document}
\sloppy 
\twocolumn[
\icmltitle{Incremental Topological Ordering and Cycle Detection with Predictions}



\begin{icmlauthorlist}
\icmlauthor{Samuel McCauley}{williams}
\icmlauthor{Benjamin Moseley}{cmu}
\icmlauthor{Aidin Niaparast}{cmu}
\icmlauthor{Shikha Singh}{williams}
\end{icmlauthorlist}

\icmlaffiliation{williams}{Department of Computer Science, Williams College, Williamstown, MA 01267 USA}
\icmlaffiliation{cmu}{Tepper School of Business, Carnegie Mellon University,Pittsburgh, PA 15213 USA}

\icmlcorrespondingauthor{Samuel McCauley}{sam@cs.williams.edu}
\icmlcorrespondingauthor{Benjamin Moseley}{moseleyb@andrew.cmu.edu}
\icmlcorrespondingauthor{Aidin Niaparast}{aniapara@andrew.cmu.edu}
\icmlcorrespondingauthor{Shikha Singh}{shikha@cs.williams.edu}

\icmlkeywords{Machine Learning, ICML}

\vskip 0.3in
]

\newif\ifproofof
\proofoftrue

\ifproofof
\newenvironment{proofof}[1]{$ $\newline \noindent{\em Proof of {#1}. }\ignorespaces}{\qed \newline}

\else
\newenvironment{proofof}[1]{\noindent{\em Proof. }\ignorespaces}{\qed \newline }
\fi



\begin{NoHyper} 

\printAffiliationsAndNotice{}  
\end{NoHyper}

\begin{abstract}

This paper leverages the framework of algorithms-with-predictions to design data structures for two fundamental dynamic graph problems: incremental topological ordering and cycle detection.  In these problems, the input is a directed graph on $n$ nodes, and the $m$ edges arrive one by one.  The data structure must maintain a topological ordering of the vertices at all times and detect if the newly inserted edge creates a cycle.  The theoretically best worst-case algorithms for these problems have high update cost (polynomial in $n$ and $m$).  In practice, greedy heuristics (that recompute the solution from scratch each time) perform well but can have high update cost in the worst case. 

In this paper, we bridge this gap by leveraging predictions to design a learned new data structure for the problems. 
Our data structure guarantees consistency, robustness, and smoothness with respect to predictions---that is, it has the best possible running time under perfect predictions, never performs worse than the best-known worst-case methods, and its running time degrades smoothly with the prediction error.  Moreover, we demonstrate empirically that predictions, learned from a very small training dataset, are sufficient to provide significant speed-ups on real datasets. 
\end{abstract}

\section{Introduction}
 
A recent line of research 
has focused on how learned predictions can be used to enhance the running time of algorithms. This novel approach, often referred to as \emph{warm starting}, initializes an algorithm with a machine-learned starting state to optimize efficiency on a new problem instance. 
This starting state can significantly improve performance over the conventional method of solving problems from scratch.

Warm starting algorithms with machined-learned predictions can be viewed through the lens of {beyond-worst-case} analysis. While the predominant algorithmic paradigm for decades has been to use worst-case analysis, warm starting takes into account that real-world applications repeatedly solve a problem on similar instances that share a common underlying structure.  Predictions about these input instances can be used to the improve running time of future computations.

This new line of research, called \emph{algorithms with predictions} or \emph{learning-augmented algorithms}, leverages predictions to achieve strong guarantees--much like those achieved using worst-case analysis---for warm-started algorithms. Under this setting, the performance of the algorithm is measured as a function of the \emph{prediction quality}. This ensures that the algorithm is robust to prediction inaccuracies and has performance that interpolates smoothly between ideal and worst-case guarantees with respect to predictions. 

Recent proof-of-concept results have demonstrated the potential to enhance the running time of offline algorithms. The area was empirically initiated by Kraska et al. \cite{KraskaBCDP18}.  Theoretically, Dinitz et al. \cite{DinitzILMV21} were the first to provide a theoretical framework for using warm-start to improve the running time of the weighted bipartite matching problem. Follow-up works include the application of learned predictions to improve the efficiency of computing flows using Ford-Fulkerson~\cite{DaviesMVW}, shortest path computations using Bellman-Ford~\cite{LattanziSV23}, binary search~\cite{BaiC23}, convex optimization~\cite{SakaueO22} and maintaining a dynamic sorted array \cite{McCauleyMNS23}.
 These results showcase the promising potential to harness predictions more broadly for algorithmic efficiency.  

 Data structures are one of the most fundamental algorithmic domains, forming the backbone of most computer systems and databases.  Leveraging predictions to improve data structure design remains a nascent research area. Empirical investigations, initiated by~\cite{KraskaBCDP18} and follow-ups such as~\cite{ferragina2021performance}, demonstrate the exciting potential of speeding up indexing data structures using machine learning. More recently~\cite{McCauleyMNS23} developed the first data structure in the new theoretical framework of algorithms with predictions.  They design a learned data structure to maintain a sorted array efficiently under insertions (aka \emph{online list labeling}).
 Since then, two concurrent works~\cite{BrandFNPS24} and~\cite{HenzingerLSSY24} show how to leverage predictions for maintaining dynamic graphs for problems such as shortest paths, reachability, and triangle detection via predictions for the matrix-vector multiplication problem.

This paper focuses specifically on developing the area of data structures for dynamic graph problems. We study the fundamental problems of maintaining an \textbf{incremental topological ordering} of the nodes of a directed-acyclic graph (DAG) and the related problem of \textbf{incremental cycle detection}.  In the problem, a set of $n$ nodes $V$ is given and the edge set is initially empty.  Over time, {directed edges}  arrive that are added to the graph.  The algorithm must maintain a topological ordering of $V$ at all times. A \defn{topological ordering} is a labeling $L: V \rightarrow \mathbb{Z}$ of the vertices $V$ such that $L(v) < L(u)$ if there is a directed path from $v$ to $u$. A topological ordering exists if and only if the directed graph is acyclic.  Thus, if an edge is inserted that creates a cycle, the data structure must report that a cycle has been detected, after which the algorithm ends.

 The goal is to design an online algorithm that has small \defn{total update time} for the $m$ edge insertions.  Offline, when all edges are available a priori,  the problem can be solved in   $O(m)$ (linear time) by running Depth-First-Search (DFS). A naive approach to the incremental problem is to use DFS from scratch each time an edge arrives, giving $O(m^2)$ total time.  The goal is to design dynamic data structures that can perform better than this naive approach.

Topological ordering and cycle detection are foundational textbook problems on DAGs. Incremental maintenance of DAGs is ubiquitous in database and scheduling applications with dependencies between events (such as task scheduling, network routing, and casual networks).   
Due to their wide use, there has been substantial prior work on maintaining incremental topological ordering in the worst case (without predictions). Prior work can roughly be partitioned into the cases where the underlying graph is sparse or dense.   A line of work~\cite{BenderFiGi09,haeupler2012incremental,BenderFiGi15,BernsteinChechi18,BhattacharyaKulkarni20} for sparse graphs  led to~\cite{BhattacharyaKulkarni20} giving a randomized  algorithm with total update time $\tilde{O}(m^{4/3})$.  The  $\tilde{O}$ suppresses logarithmic factors. For dense graphs, a line of work \cite{CohenFKR13,BenderFiGi15} has total update time  $\tilde{O}(n^2)$. These results hold for both incremental topological ordering and cycle detection.  A recent breakthrough~\cite{ChenKyLi23} uses new techniques to improve the running time of incremental cycle detection to $O(m^{1+o(1)})$; their results do not extend to topological ordering. At present there are no nontrivial lower bounds for either problem, that is, it is not known if there exists an algorithm with update time $\tilde{O}(m)$.  

Despite the rich theoretical literature on the problem, there is limited empirical evidence of their success~\cite{ajwani2008n}.  As most practical data is non-worst-case, greedy brute-force methods do well empirically~\cite{baswana2018incremental}. The algorithms-with-predictions framework is motivated precisely by this disconnect between high-cost worst-case methods and simple practical heuristics.  The goal of designing learned algorithms in this framework is to extract beyond-worst-case performance on typical instances, while being robust to bad predictions in the worst case.

More formally, in the algorithms-with-predictions framework, an algorithm is (a) \defn{consistent} if it matches the offline optimal (or outperforms the worst case) under perfect predictions, (b) \defn{robust} if it is never worse than the best worst-case algorithm under adversarial predictions, and (c) \defn{smooth} if it interpolates smoothly between these extremes.  We call an algorithm \defn{ideal} if it is consistent, robust, and smooth.

 In this paper, we initiate the study of how learned predictions can be leveraged for incremental topological ordering. We propose a \defn{coarse-grained prediction model} and use it to design a new ideal data structure for the problem; see Section~\ref{sec:results}.  Moreover, we present a practical learned DFS algorithm and our experiments show that using even mildly accurate predictions leads to significant speedups. All our results extend the incremental cycle detection.  Our results complement the concurrent theoretical work by \cite{BrandFNPS24} on dynamic graph data structures; 
 see Section~\ref{sec:related}.

\subsection{Our Contributions}\label{sec:results}

We first propose a prediction model for the problem and then use it to formally describe our results.  

\paragraph{Coarse Prediction Model.}  
For the incremental topological ordering problem, it is natural to consider predictions on the nodes which give information about their relative ordering in the final graph.  Intuitively, a vertex is earlier in the ordering if it has few ancestors and many descendants.  For technical reasons, instead of predicting the number of ancestor and descendant vertices, we predict the number of ancestor and descendant edges.\footnote{This is because the running time of the learned algorithm depends on the number of edges traversed.}
More formally, for each vertex $v$, let $\alpha(v)$ be the total number of ancestor edges of $v$ after all edges arrive, and let $\delta(v)$ be the number of descendant edges.  An edge $(u,w)$ is an \textbf{ancestor edge} of $v$ in there is a directed path from $w$ to $v$.  The edge $(u,w)$ is a \textbf{descendent edge} of $v$ in there is a directed path from $v$ to $u$.  At the beginning of time, the algorithm is given predictions $\tilde{\alpha}(v)$ and $\tilde{\delta}(v)$ for $\alpha(v)$ and $\delta(v)$ for each vertex $v$.  The \emph{error} in the prediction for vertex $v$ is $\eta_v = |\alpha(v) - \tilde{\alpha}(v)| + |\delta(v) - \tilde{\delta}(v)|$.  The overall prediction error of the input sequence is $\etamax = \max_{v\in V} \eta_v$.\footnote{For simplicity, we assume throughout our analysis that $\etamax \geq 1$; this is to avoid $\etamax + 1$ terms throughout our running times.}  

We note that our prediction model predicts a small amount of information about the input, in contrast to models that predict the entire input sequence, e.g.~\cite{brand2023dynamic, henzinger2018decremental}.  In particular, predictions that predict the entire input are \defn{fine-grained}---each possible input sequence maps to a unique perfect prediction.  Our predictions are \defn{course-grained} because there are many possible input graphs that can map to a single perfect prediction.  Intuitively, the more coarse-grained the prediction, the more robust it is to small changes in the input.

\paragraph{Ideal Learned Ordering.} We present a new learned data structure for the incremental topological ordering, called \defn{Ideal Learned Ordering}. This data structure has total update time $\tilde{O}(\min\{n \etamax+m,  m\etamax^{1/3}, n^2 \})$. The data structure is ideal with respect to predictions, in particular, it is:
\begin{compactitem}
\item {\bf Consistent:} If $\etamax = O(1)$, its performance matches (up to logarithmic factors) the best possible running time $\tilde{O}(m)$ of an \emph{offline} optimal algorithm.  

\item {\bf Robust:} For any $\eta \leq m$,  the total running time is $\tilde{O}(\min\{m^{4/3}, n^2\})$, and thus its performance is never worse than the best-known worst-case algorithms~\cite{BenderFiGi15} and~\cite{BhattacharyaKulkarni20}.

\item {\bf Smooth:} For any intermediate error $\eta$, the performance smoothly interpolates as a function of $\eta$ (and $n$ and $m$), between the above two extremes.
\end{compactitem}

At a high level, the ideal learned ordering decomposes the vertices into subproblems based on the predictions.  On each subproblem, it runs the best-known worst-case algorithm, which is warm-started with the predictions.

\paragraph{Learned DFS Ordering and Empirical Results.} In addition to the above ideal algorithm, we present a simple practical data structure that essentially warm-starts depth-first search using predictions.  We call this the \defn{learned DFS ordering} (LDFS).  This data structure has total update time $O(m \etamax)$; thus each insert has running time $O(\etamax)$. We implement LDFS and our experiments show that with very little training data, the predictions deliver excellent speed-ups.  In particular,  we demonstrate on real time-series data that using only 5\% of training data, LDFS explores over 36x fewer vertices and edges than baselines, giving a 3.1x speedup in running time. Moreover, its performance is extremely robust to prediction errors; see Figure~\ref{fig:email-Eu-core robustness}. 

\subsection{Related Work}\label{sec:related}
Recently, \cite{BrandFNPS24} leverage predictions for dynamic graph data structures. They give a general result for the online matrix-vector multiplication problem where the matrix is given and a sequence of vectors arrive online. They apply this to several dynamic graph problems including cycle detection. 
Their data structure requires $O(n^\omega+ n \sum_{i \in V} \min \{ \delta_i,n \})$ total time where $\delta_i$ is the error between when edge $i$ arrives and when it is predicted to arrive and $n^\omega$ is the time to perform matrix multiplication.    Their prediction is the entire input, that is, the online sequence of vectors.  The predictions used in this work are more course-grained (only require a pair of numbers per vertex), and thus are robust to small perturbations to the input sequence.   
Moreover, their work is purely theoretical and leaves open (a) how predictions can be leveraged for maintaining topological ordering, and (b) how predictions can be empirically leveraged for dynamic graph problems. Our work addresses both and complements their findings.     

The Ideal Learned Ordering uses the best-known sparse algorithm~\cite{BhattacharyaKulkarni20} and the best-known dense algorithm~\cite{BenderFiGi15}, referred to as BK and BFGT throughout. We briefly summarize them; we refer to the papers for more details.   

The BFGT algorithm maintains levels $\ell(u)$ for each vertex $u$: these are underestimates of the total number of ancestors of $u$ in the final graph.  The levels are initially set to $1$. On an edge insertion $(x, y)$, if $\ell(x) > \ell(y)$, they greedily update levels to maintain a topological ordering. To improve the efficiency, they update $y$'s level even if $\ell(x) \leq \ell(y)$ if a better underestimate of the number of ancestors of $y$ is available (based on its predecessors' levels). The total time for all insertions is bounded by $\tilde{O}(m + \sum_v \ell(v))$. As $\ell(v)$ is at most the number of ancestors, their total running time is $\tilde{O}(n^2)$.  
In Section~\ref{sec:theory}, we use predictions to ensure that BFGT is run on subproblems containing vertices with $O(\eta)$ ancestors.
Thus, the levels can only increase at most $\eta$ times, which leads to the total update time $\tilde{O}(m + n \eta)$.

The BK algorithm (which is based on~\cite{BernsteinChechi18}) also partitions the vertices into levels, but these are based on their ancestors \emph{and} descendants. It is a randomized algorithm and initializes the vertex levels using sampling. In particular, they use an internal parameter $\tau$ where each vertex $v \in V$ is sampled with probability $\Theta(\log n/\tau)$.  A vertex is in a level $(i,j)$ if it has $i$ ancestors and $j$ descendants among the sampled nodes.  They bound the number of possible ancestors and descendants of a vertex within a level using the parameter $\tau$.  In Section~\ref{sec:theory}, we use their algorithm as a blackbox with the exception that we set $\tau$ using predictions.  For the analysis, we give a tighter bound of Phase I and II of their algorithm.

\subsection{Organization}\label{sec:outline}
Section~\ref{sec:model} defines the model. Learned DFS Ordering is presented in Section~\ref{sec:dfs}; which is generalized to the Ideal Learned Ordering in Section~\ref{sec:theory}.
Finally, Section~\ref{sec:experiments} presents experimental results. For space, many proofs and further experiments are deferred to the Appendices~\ref{sec:omitted_proofs} and \ref{sec:additional_experiments}.

\section{Model and Definitions}
\label{sec:model}

\paragraph{Directed Graphs.}
Consider a directed graph $G = (V,E)$ with $|V| = n$ and $|E| =m$.  Let $(u,v) \in E$ denote a directed edge from $u$ to $v$.  We say that a vertex $v$ is an \defn{ancestor} of vertex $w$ if there is a path from $v$ to $w$ in the graph. We say $w$ is a \defn{descendant} of $v$ if $v$ is an ancestor of $w$.  A vertex is an ancestor and descendant of itself.  We say that an edge $(u, v)$ is an \defn{ancestor edge} of a vertex $w$ if $v$ is an ancestor of $w$.  Similarly, an edge $(u,v)$ is a \defn{descendant edge} of a vertex $w$ if $u$ is a descendant of $w$. If $(v, w)$ is an edge then we say that $v$ is a \defn{parent} of $w$ and $w$ is a \defn{child} of $v$.  
A \defn{topological ordering} of a directed acyclic graph (DAG) $G = (V, E)$ is a labeling $L: V \rightarrow \mathbb{Z}$ such that for every edge $(v, w) \in E$, we have $L(v) < L(w)$.\footnote{Such a topological ordering is also referred to as a \emph{weak topological ordering}~\cite{BenderFiGi15} as it does not require a total ordering on the vertices.}  A directed graph has a \defn{cycle} if there exist vertices $u$ and $w$ that are mutually reachable from each other:  that is, $u$ is both an ancestor and descendant of $w$. A topological ordering of a directed graph exists if and only if it is acyclic.

\paragraph{Incremental Graph Problems.}
In the incremental topological ordering and cycle detection problems, initially, there are $n$ vertices $V$ and no edges. The $m$ edges from the set $E$ arrive one at a time and are inserted into the graph data structure.    
Let $G_t$ denote the graph after $t$ edges have been inserted (which we also refer to as \defn{time $t$}).  We assume that after an edge is inserted, the graph continues to be acyclic.  If an edge insertion leads to a cycle, the algorithm must report the cycle and terminate. Thus $G_m$ denotes the final graph (after the last edge insertion that does not create a cycle).  

The performance of the graph data structure is measured as its \defn{total running time} to perform all $m$ edge insertions. In Sections~\ref{sec:dfs} and~\ref{sec:theory}, we use the terms \defn{total cost} and \defn{total update time} and total running time interchangeably.  We the notation $\tilde{O}$ defined as $\tilde{O}(f(n)) = O(f(n) \cdot \text{polylog}(n))$.

\paragraph{Prediction Model.}  In the incremental topological ordering problem with predictions, the data structure additionally obtains a prediction for each vertex $v \in V$ at the beginning. Intuitively, this prediction helps the data structure initialize the label of $v$ to be closer to a feasible topological ordering. 

For a vertex $v$, let $\alpha(v)$ be the total number of ancestor edges of $v$ in the final graph $G_m$.  Analogously, let $\delta(v)$ be the total number of descendant edges of $v$ in the final graph $G_m$.  The Learned-DFS Ordering in Section~\ref{sec:dfs} receives a prediction $\anp(v)$ of $\alpha(v)$ for each vertex $v$.\footnote{Note that the algorithm also works if we instead receive a prediction of the number of ancestor \emph{vertices} of $v$.  However, this increases the running time to $O(m\etamax \frac{m}{n})$.}  The prediction error of a vertex $v$ is $\eta_v = |\anp(v) - \alpha(v)|$.  

The Ideal Learned Ordering in Section~\ref{sec:theory} receives a prediction $\anp(v)$ of the number of ancestors $\alpha(v)$ and $\dep(v)$ of the number of descendants $\delta(v)$ respectively, for each vertex $v$.  The prediction error of the vertex $v$ is ${\eta_v = |\anp(v) - \alpha(v)| + |\dep(v) - \delta(v)|}$.

The overall error is $\etamax = \max_v \eta_v$ throughout the paper.

\section{Learned-DFS Ordering}%
\label{sec:dfs}

In this section, we give a simple and easy-to-implement data structure that achieves $O(m\eta)$ total update time.  We refer to this algorithm as the \defn{Learned DFS Ordering (LDFS)}.

\subsection{Algorithm Description}
At all times, the algorithm maintains a \emph{level} $\ell(v)$ for each vertex, which is a number from $0$ to $m$. For each vertex $v$, the algorithm maintains a linked list $in(v)$ of $v$'s parents at the same level (i.e.\ a linked list of all parents $p$ of $v$ with $\ell(p) = \ell(v)$).  Finally, to maintain a topological ordering, the algorithm additionally maintains a (global) counter $a$, and for each vertex $v$ an integer $j(v)\in\{1, \ldots, nm+n\}$.

Initially, $a = nm+1$, and for each $v$, $\ell(v) = \anp(v)$, $in(v) = \{\}$, and $j(v) = nm+1$.
On insertion of an edge $e = (u, v)$, if $\ell(u) >\ell(v)$, set $\ell(v) \leftarrow \ell(u)$ and $in(v) \leftarrow \{u\}$.  Then, do a forward search from $v$ to recursively update $v$'s descendants.  That is, for each child $w$ of $v$, if $\ell(v) > \ell(w)$, update $\ell(w)$ and $in(w)$ and recurse.  Report a cycle if one is found; otherwise calculate a topological order $T_f$ on all vertices whose levels changed during this search.

\paragraph{Cycle detection.} After the above update concludes, if $\ell(u) = \ell(v)$, 
do a reverse DFS starting at $u$ (i.e.\ a DFS where edges are followed backward) only following edges $in(u)$ from vertices at the same level. 
If this search visits $v$, report a cycle. Otherwise, let $T_b$ be a topological order on the vertices visited during this DFS (e.g., $T_b$ can be computed by ordering vertices in the order of their DFS finish times).

\paragraph{Topological ordering.} The ordering imposed by the level $\ell(v)$ on the vertices is a \defn{pseudo-topological ordering} ~\cite{BenderFiGi15}.  Indeed, our algorithm can be viewed as a simplification of the sparse algorithm in~\cite{BenderFiGi15} with the addition that levels are initialized using predictions.  

Bender et al.\ describe how to extend their ordering to a topological order by breaking ties between vertices on a level using the order in which they are traversed in the reverse DFS.  We use a similar technique here.  Concatenate $T_b$ and $T_f$ to create a single topological order $T$. If $\ell(u) = \ell(v)$ and $j(u) \geq j(v)$, proceed through each vertex $w\in T$ in reverse order.  Set $j(w) = a$, then $a = a-1$, and then set $w$ to the previous vertex in $T$.  

We define the \defn{label} of a vertex $v$ to be $L(v) = \ell(v)(nm+2) + j(v)$.
 The algorithm maintains the following invariants. 

\begin{invariant}
\label{inv:tabels_are_top_sort}
    For any edge $e = (u, v)$ in the graph $G_t$ at time $t$, $\ell(u) \leq \ell(v)$.
\end{invariant}

\begin{invariant}[{\citep[Theorem 2.5]{BenderFiGi15}}]%
\label{inv:a_is_decreasing}
At all times, $a\in\{1, \ldots, nm+1\}$; furthermore, $a$ is nonincreasing over the entire run of the algorithm.
\end{invariant}

\begin{invariant}
\label{inv:label_is_max}
At any time $t$ and any vertex $v$, let $A_t(v)$ be the set of ancestors of $v$ in $G_t$.  Then, the level of $v$ in $G_t$ is
$\ell(v) = \max_{a\in A_t(v)} \anp(a)$.
\end{invariant}

\subsection{Analysis}
The following proves that the algorithm is always correct.

\begin{lemma}%
\label{lem:simple_weak_top_sort}
If the insertion of the last edge creates a cycle in $G_t$, the simple learned algorithm correctly detects and reports it. 
    Furthermore, for any edge $e = (u, v)$ in the graph $G_t$ at time $t$, $L(u) < L(v)$.
\end{lemma}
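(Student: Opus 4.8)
The plan is to induct on the number $t$ of inserted edges, invoking Invariants~\ref{inv:tabels_are_top_sort}--\ref{inv:label_is_max} and additionally maintaining, by the same induction, that each list $in(w)$ holds exactly the current same-level parents of $w$ and that every edge $(x,y)$ with $\ell(x)=\ell(y)$ satisfies $j(x)<j(y)$. The base case is immediate. In the inductive step, $G_{t-1}$ is acyclic and the only new edge is $e=(u,v)$, so $G_t$ is cyclic iff there is a path $P$ from $v$ to $u$ in $G_{t-1}$; in that event $u$, $v$, and all vertices of $P$ lie on a common cycle of $G_t$, hence have the same ancestor set in $G_t$ and therefore, by Invariant~\ref{inv:label_is_max} (which the forward search preserves), the same level $\ell$.

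For cycle detection I would show soundness and completeness. \emph{Soundness}: the forward search only walks forward along edges of $G_{t-1}$ out of descendants of $v$, so a cycle report there witnesses $v$ reaching $u$; the reverse DFS only walks backward along $in$-edges, which are actual edges of $G_t$, so reaching $v$ there exhibits a path from $v$ to $u$, which with $e$ closes a cycle. Hence nothing is reported when $G_t$ is acyclic. \emph{Completeness}: if $G_t$ is cyclic, take $P$ as above; every vertex of $P$ is at level $\ell$ in $G_t$ by the observation. If $\ell(u)>\ell(v)$ initially, the forward search raises $\ell(v)$ to $\ell$ and either reaches $u$ (and reports) or finishes, after which $\ell(u)=\ell(v)$; in all remaining cases the reverse DFS runs, and since every edge of $P$ now joins two level-$\ell$ vertices it is recorded in the $in$-lists, so the reverse DFS from $u$ retraces $P$ and visits $v$.

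For the labeling claim, recall $L(w)=\ell(w)(nm+2)+j(w)$, and observe that every $j$-value is the initial $nm+1$ or a value formerly held by the nonincreasing counter $a\in\{1,\dots,nm+1\}$ (Invariant~\ref{inv:a_is_decreasing}), so $1\le j(w)\le nm+1$ always. Fix an edge $(u,v)$ of $G_t$; Invariant~\ref{inv:tabels_are_top_sort} gives $\ell(u)\le\ell(v)$. If $\ell(u)<\ell(v)$, the level term contributes at least $nm+2$ to $L(v)-L(u)$, which exceeds the at-most-$nm$ range of the $j$-term, so $L(u)<L(v)$. If $\ell(u)=\ell(v)$, it remains to re-establish $j(u)<j(v)$. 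For this I would argue: first, that every same-level edge of $G_t$ that is new relative to $G_{t-1}$ --- $e$ itself, or an edge whose two endpoints were at different levels in $G_{t-1}$ --- has both endpoints among the vertices touched this step, namely in $T=T_b\cdot T_f$ (using that a level only ever rises to the common value $\ell$, so Invariant~\ref{inv:tabels_are_top_sort} on $G_{t-1}$ pins down where each endpoint must lie); second, that $T$ is a genuine topological order of the subgraph on its vertices ($T_b$ in DFS finish-time order and $T_f$ are each topological, and no same-level edge runs from $T_f$ back to $T_b$ without creating a cycle), so the reassignment --- which rewrites the $j$-values of the vertices of $T$ in the order $T$ using the strictly decreasing $a$ --- makes every same-level edge inside $T$ satisfy $j(\cdot)<j(\cdot)$, in particular $e$; and third, that every same-level edge with an endpoint outside $T$ was already present and co-level in $G_{t-1}$, that such an ``outside'' endpoint cannot be the tail of an edge into a vertex of $T_b$ (the reverse DFS is closed under same-level predecessors, by $in$-completeness), and that every $j$-value written this step is strictly below every previously written one (since $a$ only decreases), so its inductive inequality is preserved.

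The main obstacle is this last point: showing that a local rewrite of the $j$-values along $T_b\cdot T_f$, coordinated only through the single monotone counter $a$, keeps the within-level tie-breaking consistent for \emph{all} edges of $G_t$ rather than only the rewritten ones --- in particular pinning down precisely which vertices the forward search must include in $T_f$. This is the learning-augmented analogue of the tie-breaking argument of Bender et al., and is exactly where the closure property of the reverse DFS and the monotonicity of $a$ (Invariant~\ref{inv:a_is_decreasing}) must be used together with care.
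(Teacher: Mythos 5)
Your overall structure matches the paper's proof exactly: induct on $t$, use Invariant~\ref{inv:tabels_are_top_sort} to confine any cycle to a single level, split the labeling argument on $\ell(u)<\ell(v)$ versus $\ell(u)=\ell(v)$, and for the equal-level case reason about how the $T$-rewrite and the monotone counter $a$ interact. The soundness/completeness discussion for cycle detection, and your use of Invariant~\ref{inv:a_is_decreasing} to separate newly-written $j$-values from older ones, are in the same spirit as, and somewhat more explicit than, the paper's proof.

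There is, however, a concrete error in your ``first'' claim. You assert that the newly inserted edge $e=(u,v)$ always has both endpoints in $T=T_b\cdot T_f$, and you then rely on this (``makes every same-level edge inside $T$ satisfy $j(\cdot)<j(\cdot)$, in particular $e$''). This is false precisely in the case that matters most: if $\ell(u)=\ell(v)$ already held before inserting $e$ and $G_t$ is acyclic, then no forward search is performed (so $T_f=\emptyset$, and in particular $v\notin T_f$), and the reverse DFS from $u$ cannot reach $v$ (it would have reported a cycle), so $v\notin T_b$ either. Thus $v\notin T$, while $u\in T_b$, and your second step (topologicality of $T$) says nothing about $e$. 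Nor does your ``third'' claim cover it, since that clause only treats same-level edges that ``were already present'' in $G_{t-1}$, and $e$ was not. The paper's proof handles exactly this situation as a separate case (``$u$ is updated and $v$ is not''): $j(u)$ is overwritten to a fresh value of $a$, which by Invariant~\ref{inv:a_is_decreasing} is strictly below every $j$-value previously written and hence below the untouched $j(v)$, giving $j(u)<j(v)$. Your argument already contains the needed monotonicity observation in the third claim; you simply need to detach it from the ``already present'' hypothesis and apply it to $e$ when $v\notin T$, rather than assuming $e\subseteq T$.

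A secondary, smaller caveat: when you say $T$ is a topological order ``on the vertices touched this step,'' make sure the claim includes the cross-edge $e$ when $v\in T_f$; this holds because $T_b$ is listed before $T_f$, $u\in T_b$, and $v\in T_f$, but it is worth stating, since it is the only place the concatenation order $T_b\cdot T_f$ is actually used.
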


We bound the running time by bounding the cost of the forward search to update levels, and the reverse DFS within a level to detect a cycle.  

We first upper bound how big the levels can get using $\etamax$.

\begin{lemma}\label{lem:levelupper}
Let $\ell_0$ and $\ell_m$ denote the initial and final level of any vertex $v$.  Then, $\ell_m - \ell_0 \leq 2 \etamax$.
\end{lemma}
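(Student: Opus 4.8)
The plan is to show that the level of any vertex $v$ can never exceed $\alpha(v) + \etamax$, and can never fall below $\alpha(v) - \etamax$ (in fact it starts there and only increases). Combining these two bounds gives $\ell_m - \ell_0 \le (\alpha(v) + \etamax) - (\alpha(v) - \etamax) = 2\etamax$.

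First I would use Invariant~\ref{inv:label_is_max}: at any time $t$, $\ell(v) = \max_{a \in A_t(v)} \anp(a)$, where $A_t(v)$ is the set of ancestors of $v$ in $G_t$. Since $A_t(v) \subseteq A_m(v)$ (ancestors only accumulate over time in the incremental setting), we have $\ell(v) \le \max_{a \in A_m(v)} \anp(a)$ at all times; in particular $\ell_m = \max_{a \in A_m(v)} \anp(a)$. Now for each ancestor $a$ of $v$ in the final graph, $\anp(a) \le \alpha(a) + \eta_a \le \alpha(a) + \etamax$. The key structural observation is that if $a$ is an ancestor of $v$, then every ancestor edge of $a$ is also an ancestor edge of $v$ (an edge $(x,y)$ with a path $y \leadsto a$ extends to a path $y \leadsto a \leadsto v$), so $\alpha(a) \le \alpha(v)$. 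Hence $\anp(a) \le \alpha(v) + \etamax$ for every $a \in A_m(v)$, giving $\ell_m \le \alpha(v) + \etamax$.

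For the lower bound on $\ell_0$: initially $\ell_0 = \ell(v) = \anp(v) \ge \alpha(v) - \eta_v \ge \alpha(v) - \etamax$. Since $v \in A_t(v)$ always (a vertex is its own ancestor) and levels are only ever increased by the algorithm (the forward search only sets $\ell(w) \leftarrow \ell(v)$ when $\ell(v) > \ell(w)$), the initial level is also the minimum level $v$ ever has, so $\ell_0 \ge \alpha(v) - \etamax$. Subtracting, $\ell_m - \ell_0 \le \etamax + \etamax = 2\etamax$.

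The main obstacle is making the structural claim $\alpha(a) \le \alpha(v)$ airtight — specifically, verifying that the "ancestor edge" relation is monotone along paths, i.e., that ancestor edges of an ancestor are ancestor edges of the descendant. This follows directly from path concatenation in the final graph, but one should be careful about the endpoint conventions in the paper's definition of ancestor edge (the edge $(u,w)$ is an ancestor edge of $v$ if there is a path from $w$ to $v$), and confirm that if there is a path $w \leadsto a$ and $a$ is an ancestor of $v$, concatenating gives a path $w \leadsto v$, so $(u,w)$ is indeed an ancestor edge of $v$. A minor subtlety is that Invariant~\ref{inv:label_is_max} is stated for a fixed time $t$; I use it at $t=0$ (giving $\ell_0 = \anp(v)$, since $A_0(v) = \{v\}$) and at $t=m$ (giving $\ell_m = \max_{a \in A_m(v)} \anp(a)$), which is exactly what the argument needs.
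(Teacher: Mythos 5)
Your proof is correct and follows essentially the same route as the paper's: both invoke Invariant~\ref{inv:label_is_max} at the final time, use the structural fact that an ancestor $a$ of $v$ satisfies $\alpha(a) \le \alpha(v)$, and combine the per-vertex prediction errors to obtain $2\etamax$. The only cosmetic difference is that you bound $\ell_m$ and $\ell_0$ separately against $\alpha(v)$ and subtract, whereas the paper bounds the difference $\ell_m - \ell_0 = \anp(u) - \anp(v)$ directly for the ancestor $u$ achieving the max; the underlying inequalities are identical.
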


\begin{proof}
By Invariant~\ref{inv:label_is_max}, $v$ has some ancestor $u\in G_m$ with $\ell_m(v) = \anp(u)$.  Since $\ell_0(v) = \anp(v)$ by definition, we have that $\ell_m(v) - \ell_0(v) = \anp(u) - \anp(v)$.  Any ancestor of $u$ is also an ancestor of $v$, so $\alpha(v) - \alpha(u) \geq 0$. Thus,
\begin{align*}
\ell_m(v) - \ell_0(v) &= \anp(u) -  \anp(v)  \\
&\leq \anp(u) -  \anp(v)  + (\alpha(v) - \alpha(u)) \\
&= \left(\anp(u) - \alpha(u)\right) +  \left(\alpha(v) - \anp(v)\right) \\
&\leq \etamax + \etamax.\qedhere
\end{align*}
\end{proof}

Lemma~\ref{lem:levelupper} is sufficient to bound the cost of all level updates during the forward search.

\begin{lemma}
\label{lem:simple_cost_update_labels}
    The total cost to update the levels of all vertices is $O(m\etamax)$.
\end{lemma}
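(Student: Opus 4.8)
The plan is to amortize the cost of all forward searches against the total increase in vertex levels. Each time the forward search visits a vertex $w$ from a parent $v$, there are two cases: either $\ell(v) > \ell(w)$, in which case we raise $\ell(w)$ to $\ell(v)$ (strictly increasing $\ell(w)$ by at least $1$) and continue the recursion, or $\ell(v) \le \ell(w)$, in which case we stop. So I would charge the work at $w$ in the first case to the units by which $\ell(w)$ increases, and the work in the second case (a single edge examination that terminates that branch) to the edge $(v,w)$ that was just traversed.

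First I would make precise that the cost of a single forward search triggered by inserting edge $(u,v)$ is $O\!\big(\sum_{w} (\text{increase in }\ell(w) \text{ during this search}) + (\#\text{edges examined that did not trigger a recursion})\big)$. Concretely, when the search updates $\ell(w)$, it spends $O(1)$ time (resetting $\ell(w)$, resetting $in(w)=\{v\}$, pushing $w$ on the recursion stack) plus it will then scan the out-edges of $w$; I would account for the out-edge scan of $w$ as $O(\deg^+(w))$ charged partly to further level increases of children and partly (for children whose level is not increased) to those individual edges. Summing over one search, and then over all $m$ searches, the total is $O\big(\sum_{v\in V} (\ell_m(v) - \ell_0(v)) + X\big)$, where $X$ counts, over the whole run, the number of (edge, search) pairs where an edge is examined but its head is not updated.

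Second, I would bound the two terms. For the first term, Lemma~\ref{lem:levelupper} gives $\ell_m(v) - \ell_0(v) \le 2\etamax$ for every $v$, so $\sum_{v} (\ell_m(v)-\ell_0(v)) \le 2n\etamax = O(n\etamax) = O(m\etamax)$ (using $n \le m$ in the regime of interest, or simply noting $O(n\etamax)$ already suffices and is dominated). For the second term $X$: a given edge $(v,w)$ can only be examined-and-not-trigger-a-recursion inside the search for insertion of some edge $e'$; I would argue that each edge $(v,w)$ is examined at most $O(\etamax)$ times across all searches in this "dead-end" way, because it is only examined when $v$ itself had its level just raised during that search, and by Lemma~\ref{lem:levelupper} the level of $v$ is raised at most $2\etamax$ times over the whole run (each raise is by $\ge 1$ and the total raise is $\le 2\etamax$), so $v$ participates as the tail of a traversed edge in at most $2\etamax$ distinct searches. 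Hence $X \le \sum_{v} \deg^+(v)\cdot 2\etamax = O(m\etamax)$. Combining, the total cost of all forward searches is $O(m\etamax)$.

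The main obstacle I expect is the bookkeeping in the second term: pinning down exactly when an edge is "examined" by a forward search and making sure no edge is scanned more than $O(\etamax)$ times. The clean way is the observation that the forward search only descends out of a vertex $w$ immediately after $\ell(w)$ was (re)set during that same search, so the out-edges of $w$ are scanned at most once per level-increase of $w$, and $w$ has at most $2\etamax$ level-increases total by Lemma~\ref{lem:levelupper}; every edge scanned either caused a recursive descent (absorbed into a level-increase of the child, already counted in the first term) or did not, and the non-recursive scans are exactly $X$, bounded as above. Everything else — the $O(1)$-per-unit-of-level-increase accounting and the final summation — is routine.
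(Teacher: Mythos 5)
Your argument is correct and is essentially the paper's: both charge the edge scans performed during the forward search to the number of times a vertex's level is raised, and bound the latter by $2\etamax$ via Lemma~\ref{lem:levelupper}. The paper writes this more compactly as $O\bigl(\sum_v \Delta(v)\,(\ell_m(v)-\ell_0(v)+1)\bigr)=O(m\etamax)$, where $\Delta(v)$ is $v$'s total degree. That compact form sidesteps one small imprecision in your write-up: your additive term $\sum_v(\ell_m(v)-\ell_0(v))\le 2n\etamax$ is not a priori $O(m\etamax)$ when $n\gg m$; the fix is immediate (a vertex's level can only rise if it has an in-edge, so at most $m$ vertices contribute), or one can simply fold the $O(1)$ per-raise overhead into the $O(\Delta(v))$ per-raise scan cost as the paper does. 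A further minor difference: the paper charges $O(\deg^-(w))$ per raise of $w$ to rebuild $in(w)$ by scanning in-edges, whereas you reset $in(w)=\{v\}$ in $O(1)$; both accountings yield $O(m\etamax)$.
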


\begin{proof}
To obtain the total cost of updating the levels, note that each time we update the level of a vertex $v$, the algorithm recursively updates its children, and then checks each of its parents to update $in(v)$.  This takes $O({\Delta}(v))$ time, where ${\Delta}(v)$ is the sum of the outdegree and indegree of $v$. Thus, using Lemma~\ref{lem:levelupper} the total cost of all level updates is
\begin{align*}
O\left(\sum_v {\Delta}(v) \cdot (\ell_m(v) - \ell_0(v) + 1)\right) = O(m\etamax)~~~~~~~~
\qedhere
\end{align*}
\end{proof}

To bound the cost of the reverse DFS on a level, we bound the number of incoming edges on any level at any time.

\begin{lemma}
\label{lem:simple_ancestors_error}
At any time, if a vertex $v$ has $k$ ancestor edges on its level then $\eta \geq k/2$.
\end{lemma}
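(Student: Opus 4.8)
The plan is to relate the number of same-level ancestor edges of $v$ to the gap between the predicted and true ancestor counts, and then invoke Lemma~\ref{lem:levelupper}. Fix a time $t$ and a vertex $v$ with level $\ell(v)$, and let $k$ be the number of ancestor edges $(u,w)$ of $v$ in $G_t$ with $\ell(w) = \ell(v)$ (here ``ancestor edge of $v$'' means $w$ is an ancestor of $v$). Every such endpoint $w$ is an ancestor of $v$, and by Invariant~\ref{inv:tabels_are_top_sort} every vertex on a directed path from $w$ to $v$ also lies at level $\ell(v)$; so the subgraph induced on the same-level ancestors of $v$ is itself a DAG containing all $k$ of these edges. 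The key point is that this DAG has at most $2\eta$ vertices: by Lemma~\ref{lem:levelupper} every vertex $w$ with $\ell_t(w) = \ell(v)$ satisfies $\ell_t(w) - \anp(w) \le 2\eta$ and $\anp(w) \ge \alpha(w) - \eta_w \ge \alpha(w) - \eta$... actually the cleaner route is to bound the \emph{edge} count directly, since the level is built from ancestor-edge predictions in a monotone way.

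More precisely, I would argue as follows. Let $u$ be the ancestor of $v$ realizing $\ell_t(v) = \anp(u)$ as guaranteed by Invariant~\ref{inv:label_is_max}, and let $H$ be the set of same-level ancestor edges of $v$ in $G_t$. Every edge in $H$ is an ancestor edge of $v$ in $G_t$, hence also an ancestor edge of $v$ in $G_m$, so $|H| \le \alpha(v)$; but this is far too weak. Instead, observe that each edge $e = (x,y) \in H$ has $\ell_t(y) = \ell(v) = \anp(u)$, and since $y$ is an ancestor of $v$, $u$ is comparable to $y$: either $u$ is an ancestor of $y$ or... Let me use the tighter structural fact: since all endpoints in $H$ sit at the common level $\anp(u)$, and since for any such endpoint $y$ Invariant~\ref{inv:label_is_max} gives $\anp(y') \le \anp(u)$ for every ancestor $y'$ of $y$ in $G_t$, while $\anp(u) = \ell_t(v)$, the true ancestor-edge count $\alpha(v)$ of $v$ in $G_m$ is at least the number of distinct edges in $H$, i.e.\ $\alpha(v) \ge k$. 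Simultaneously $\anp(u)$ underestimates... The clean inequality I expect to land on is: the $k$ edges of $H$ are all ancestor edges of $u$'s relevant set in $G_m$, giving $k \le \alpha(v)$, combined with $\alpha(v) \le \anp(v) + \eta \le \ell_0(v) + \eta \le \ell_t(v) + \eta = \anp(u) + \eta \le \alpha(u) + 2\eta \le \alpha(v) + 2\eta$ --- so I instead want a bound of the form $\anp(u)$ counts at least $k - \eta$ genuine ancestor edges of $u$ but $\alpha(u) \le \alpha(v)$, forcing $k \le 2\eta$.

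Concretely, the argument I would write: since $y$ is an ancestor of $v$ for every $(x,y) \in H$, and $\ell_t(v) = \anp(u)$ with $u$ an ancestor of $v$, consider the set $S$ of endpoints appearing in $H$ together with $u$; all lie at level $\anp(u)$ in $G_t$, and all are ancestors of $v$ in $G_m$. By the same computation as in Lemma~\ref{lem:levelupper}, $\anp(u) \ge \alpha(u) - \eta$, and since each endpoint $y \in S \setminus \{u\}$ is an ancestor of $v$ with $\anp(y) \le \anp(u)$ (Invariant~\ref{inv:label_is_max} applied at $v$), we get $\alpha(y) \le \anp(y) + \eta \le \anp(u) + \eta \le \alpha(u) + 2\eta$. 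The $k$ edges of $H$ are precisely the same-level ancestor edges counted on the path toward $v$; each is an ancestor edge of $u$ in $G_m$ \emph{or} of some $y$ with $\alpha(y) \le \alpha(u) + 2\eta$, and since $\anp(u)$ is a prediction of $\alpha(u)$ with error at most $\eta$, only $\alpha(u) \le \anp(u) + \eta$ of these can be ``charged'' below $u$; the remaining $k - (\alpha(u)+\eta) $ edges would push some $\alpha(y)$ above $\anp(u) + \eta$, contradicting $\alpha(y) \le \anp(y)+\eta \le \anp(u)+\eta$ unless $k \le 2\eta$, i.e.\ $\eta \ge k/2$.

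\textbf{Main obstacle.} The delicate part is pinning down exactly which set of edges the level $\anp(u)$ is ``accounting for,'' so that the $k$ same-level ancestor edges of $v$ can be shown to be a subset of ancestor edges whose predicted count is at most $\anp(u)$ while the true count is at least $k$; everything else is then a two-line arithmetic manipulation of the two prediction-error bounds (one at $u$, one at $v$ or at the relevant endpoint), exactly mirroring the telescoping in the proof of Lemma~\ref{lem:levelupper}. I expect the final inequality chain to read $k \le \alpha(\cdot) \le \anp(\cdot) + \eta \le \ell_t(v) + \eta = \anp(u) + \eta \le \alpha(u) + 2\eta$ and, since $\alpha(u) \le \alpha(v)$ forces cancellation with the starting $k \le \alpha(v)$-type bound, $k \le 2\eta$, hence $\eta \ge k/2$.
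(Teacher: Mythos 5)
There is a real gap, and it sits exactly where you flag ``the delicate part'': your argument never pins down a vertex whose $G_m$-ancestor edges are provably \emph{disjoint} from the $k$ same-level ancestor edges of $v$, which is what makes the final cancellation work. You take $u$ to be an ancestor of $v$ with $\anp(u)=\ell(v)$ (as guaranteed by Invariant~\ref{inv:label_is_max}), but such a $u$ need not be minimal in the set $A$ of same-level ancestors of $v$. For instance, on a path $u_1 \to u_2 \to v$ with $\anp(u_1) = \anp(u_2) = \ell(v) > \anp(v)$, both $u_1$ and $u_2$ realize the max; if you pick $u = u_2$, the level edge $(u_1, u_2)$ is itself an ancestor edge of $u_2$, so the $k$ level edges are not disjoint from $u$'s ancestor edges and the needed inequality $\alpha(v) \ge \alpha(u) + k$ is simply false (in that tiny example $\alpha(v)=2$, $\alpha(u_2)=1$, $k=2$). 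Your closing chain $k \le \alpha(\cdot) \le \anp(\cdot) + \eta \le \anp(u) + \eta \le \alpha(u) + 2\eta$ then just says $k \le \alpha(v) \le \alpha(u) + 2\eta$, which gives nothing since $\alpha(u)$ can be of order $m$. The ``charging below $u$ / pushing some $\alpha(y)$ up'' heuristic in the middle paragraph does not rescue this either: many of the $k$ edges can all be ancestor edges of one and the same endpoint $y$, so bounding each $\alpha(y)$ individually by $\anp(u)+\eta$ does not control $k$.

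The paper supplies precisely the choice you are missing: it picks $z\in A$ that is \emph{minimal} in $A$ under $G_m$-ancestry (no $w\in A$ is an ancestor of $z$). Minimality forces disjointness: each of the $k$ level edges $(x,y)$ has $y\in A$, so if it were an ancestor edge of $z$ then $y$ would be an ancestor of $z$, contradicting minimality; hence $\alpha(v)\ge\alpha(z)+k$. Moreover, this minimal $z$ automatically has $\anp(z)=\ell(v)\ge\anp(v)$ (Invariant~\ref{inv:label_is_max} applied at $z$ plus minimality forces $z$ itself to realize the max, and levels are nondecreasing). Summing $\alpha(v)-\alpha(z)\ge k$ and $\anp(z)-\anp(v)\ge 0$ gives $(\alpha(v)-\anp(v))+(\anp(z)-\alpha(z))\ge k$, i.e.\ $\eta_v+\eta_z\ge k$, so $\eta\ge k/2$. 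So your two-vertex telescoping template is the right shape, but the second vertex must be a minimal element of $A$, not an arbitrary realizer of $\max_{a}\anp(a)$.
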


Now we can bound the cost of the reverse DFS.
The algorithm maintains incoming edges $in(v)$ of $v$ from vertices on its level in a linked list.
 Performing the reverse DFS from $v$ thus has cost $O(1 + a_t(v))$, where $a_t(v)$ is the number of ancestor edges of $v$ from vertices at level $\ell(v)$ at time $t$. 
By Lemma~\ref{lem:simple_ancestors_error}, $a_t(v_1)\leq \etamax$ and thus the reverse DFS costs $O(\etamax)$ for each insertion.  Finally, combining with Lemma~\ref{lem:simple_cost_update_labels} and the $O(n)$ time for initialization, we get the following result.
\begin{theorem}
The Learned DFS Ordering solves the incremental topological ordering and cycle detection problem with predictions in total running time $O(m\etamax + n)$.
\end{theorem}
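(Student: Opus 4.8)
The plan is to combine correctness, which is essentially already established, with a per-insertion accounting of the algorithm's three tasks, charging each to a quantity the earlier lemmas have bounded in terms of $\etamax$. Correctness needs nothing new: Lemma~\ref{lem:simple_weak_top_sort} guarantees that a cycle created by the final edge is detected and reported, and that $L(u)<L(v)$ holds for every edge $(u,v)$ of $G_t$ at every time $t$; since $L(v)=\ell(v)(nm+2)+j(v)$ is an integer in a bounded range (Invariant~\ref{inv:a_is_decreasing} keeps $j$, via $a$, in $\{1,\dots,nm+1\}$), this is a legitimate topological labeling at all times.

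For the running time, initialization costs $O(n)$. Then I would split the work done on the insertion of $(u,v)$ into (i) the forward search that raises the levels of $v$'s descendants and rebuilds their $in(\cdot)$ lists; (ii) when $\ell(u)=\ell(v)$, the reverse DFS from $u$ that follows only same-level $in(\cdot)$ edges, together with the computation of its finish-time order $T_b$; and (iii) the recomputation of labels --- the topological sort $T_f$ of the vertices whose levels just changed, the concatenation $T=T_b\cdot T_f$, and the reverse walk over $T$ that resets the $j$-values and decrements $a$. For (i), Lemma~\ref{lem:levelupper} says each vertex's level rises at most $2\etamax$ times over the whole run and each such update costs $O(\Delta(v))$, so Lemma~\ref{lem:simple_cost_update_labels} already bounds the total of (i) over all $m$ insertions by $O(m\etamax)$. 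For (ii), the reverse DFS traverses exactly the ancestor edges of $u$ that lie on level $\ell(u)$ (which is why $in(\cdot)$ stores same-level parents), and Lemma~\ref{lem:simple_ancestors_error} caps their number by $2\etamax$, so (ii) costs $O(\etamax)$ per insertion, i.e.\ $O(m\etamax)$ in total. For (iii), $|T_f|$ is at most the number of vertices touched in (i) and $|T_b|=O(\etamax)$, so $|T|$ is dominated by the costs already paid in (i) and (ii), and the reverse walk over $T$ is $O(|T|)$. Summing (i)--(iii) over all insertions and adding the $O(n)$ initialization gives $O(m\etamax+n)$.

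The step I expect to need the most care is (iii). Because $a$ ranges over $\{1,\dots,nm+1\}$, a careless argument would bound the relabeling work by $O(nm)$, which is far too large; the point is that on each insertion the relabeled set is exactly $T_b\cup T_f$, which contains only vertices examined by the forward search or visited by the reverse DFS of that same insertion, so no insertion relabels more vertices than it already looked at. A minor secondary point, feeding into (ii), is to confirm from the $in(\cdot)$ invariant that the reverse DFS stays within level $\ell(u)$ and meets precisely the edges counted by Lemma~\ref{lem:simple_ancestors_error}; everything else is routine bookkeeping.
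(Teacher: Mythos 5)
Your proof is correct and follows essentially the same decomposition as the paper: level updates bounded by Lemma~\ref{lem:simple_cost_update_labels}, reverse DFS bounded by Lemma~\ref{lem:simple_ancestors_error}, plus $O(n)$ initialization. The only addition is your explicit step (iii) arguing that the relabeling pass over $T = T_b \cdot T_f$ is dominated by the search work already counted; the paper leaves this implicit, so your version is a touch more careful but not a different argument.
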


 \section{Ideal Learned Ordering}%
\label{sec:theory}

In this section, we give an ideal learned data structure for the incremental topological ordering and cycle detection problem with total update time $\tilde{O}(m + \min\{n\etamax, n^2, m\etamax^{1/3}\})$ for $m$ edge insertions.  We refer to this algorithm as \defn{Ideal Learned Ordering}.

The algorithm receives a prediction  $\tilde{\alpha}(v)$ and $\tilde{\delta}(v)$ of the number of ancestor and descendant edges of each vertex in the final graph $G_m$.  By definition, $|\tilde{\alpha}(v) - \alpha(v)| \leq \etamax$ and $|\tilde{\delta}(v) - \delta(v)| \leq \etamax$ for all $v$.

\paragraph{Prediction Decomposition.}  At a high level, the algorithm decomposes the problem instance into smaller subproblems based on each vertex's prediction, and uses the state-of-the-art worst-case algorithm on each subproblem based on the instance's sparsity.  Recall that BK and BFGT refer to the best-known sparse algorithm by~\cite{BhattacharyaKulkarni20} and the best-known dense algorithm by~\cite{BenderFiGi15}; see Section~\ref{sec:related}.  Using a tighter analysis for these algorithms under predictions, we then bound the running time of each subproblem using the prediction error.

\subsection{Algorithm Description} 
The algorithm maintains an estimate $\heta$ which is an estimate of the overall error $\eta$ based on edges seen so far.  It also maintains a level $\ell(v)$ for each vertex, initialized using both $\anp(v)$ and $\dep(v)$.  It maintains a pseudo-topological ordering over these levels greedily.  
We decompose the initial set of vertices $V$ into a sequence of \defn{subproblems} based on the predictions for each vertex.  When an edge $e = (u,v)$ arrives, it is treated as an edge insertion into each subproblem that contains both $u$ and $v$. 
The algorithm invokes the BK or BFGT algorithm to perform this insertion and to assign internal labels within each subproblem.  

If an edge is inserted across subproblems that violates the ordering over the levels, the algorithm updates its estimate of $\heta$ and rebuilds with an improved decomposition.

 \paragraph{Algorithm setup.}
Let $\heta_i$ be the value of $\heta$ after $i$ edges are inserted.
We begin with $\heta_0 = 1$.  

We maintain a level $\ell(v)$ for each vertex $v$.  Each $\ell(v)$ consists of a pair of numbers: $\ell(v) = (\ell^a(v), \ell^d(v))$; we call this the \defn{ancestor level} and \defn{descendant level} of $v$ respectively.  The idea is that $\ell^a(v)$ and $\ell^d(v)$ are initialized using the predicted ancestors and descendants of $v$ respectively and updated as edges are inserted.

At all times, the level $\ell(v)$ has four possible values satisfying the constraints below.  These are referred to as the \defn{possible levels for $v$}.
\begin{align*}
\ell(v)^a &\in \{
{\lceil \anp(v)/\heta\rceil},
{\lceil \anp(v)/\heta\rceil + 1}
\} \\ 
\ell(v)^d &\in \{
{\lfloor \dep(v)/\heta\rfloor},
{\lfloor \dep(v)/\heta\rfloor - 1}
\} 
\end{align*}
We maintain that for any edge $e = (u, v)$, $\ell^a(u) \leq \ell^a(v)$ and $\ell^d(u) \geq \ell^d(v)$.

At any time, the vertex set $V$ is decomposed into \defn{subproblems}
$H_{j, k}$, where the indices $j, k\in \{0, \ldots, \lceil m/\heta_i \rceil + 1\}$. Each subproblem $H := H_{j,k}$ is a subgraph of $G_t$ and represents an instance of the incremental topological ordering problem (possibly at an intermediate state with some edges already inserted). 
 A vertex $v$ can be part of at most four subproblems, indexed by one of its possible levels:
    \[\calH(v) = \{ H_{j,k} ~|~ (j,k) \text{ is a possible level of $v$}\}.\]

As each vertex is in at most four subproblems, the algorithm maintains $O(n)$ subproblems at any point; note that ``empty'' subproblems are not maintained.

\paragraph{Initialization and Build.} We first describe how to perform a \textsc{Build} on a graph $G_t$; \textsc{Build} is called each time the estimate $\heta$ changes.  At initialization, \textsc{Build}($G_0$) adds each vertex $v$ to the subproblems $\calH(v)$. 
If a sequence of edge insertions $e_1, \ldots, e_t$ are such that $t$th insertion causes $\heta_{t}$ to be updated, then 
\textsc{Build}($G_t$) first updates $\calH(v)$ for each $v$ based on the updated value of $\heta_t$ and adds $v$ to $\calH(v)$. Then it calls $\textsc{Insert}(e_i)$ for $i \in \{1, \ldots, t\}$ using the insert algorithm described next.

The insert algorithm uses a further subroutine \textsc{Build-BFGT}$(H)$, which is used to ``switch'' a subproblem from the sparse case to the dense case.  
Let $V_H$ and $E_H$ be the vertices and edges currently in $H$.
\textsc{Build-BFGT} initializes an instance of BFGT on vertices $V_H$, and then inserts all edges in $E_H$ one by one using BFGT.

\paragraph{Edge Insertion.}  
On the insertion of the $t$th edge $e_t$, \textsc{Insert}$(e_t)$  first recursively updates the ancestor and descendant levels of $v$ and $u$ in $G_t$. That is, if $\ell^a(u) > \ell^a(v)$, set $\ell^a(v) = \ell^a(u)$ and recurse on all out-edges of $v$. Similarly, if $\ell^d(u) < \ell^d(v)$, set $\ell^d(u) = \ell^d(v)$ and recurse on all in-edges of $u$.  This maintains the following invariant.

\begin{invariant}%
\label{inv:levels}
    For any edge $e = (u,v)$, $\ell^a(u)\geq \ell^a(v)$ and $\ell^d(u)\leq \ell^d(v)$.
\end{invariant} 

If for any vertex $v$, the updated value of $\ell(v)$ is not one of the possible levels of $v$, the algorithm doubles $\heta$ (i.e.\ set $\heta_i = 2\heta_{i-1}$) and calls \textsc{Build} on $G_t$.

Next, we describe how the algorithm inserts $e_t$ into all subproblems $H\in \calH(u)\cap\calH(v)$ using the BK or BFGT algorithm based on whether the subproblem is sparse or dense.  Without predictions, a graph is termed sparse if $m = o(n^{3/2})$ and dense otherwise.  To determine if a subproblem with predictions is sparse or dense, the algorithm takes error $\heta$ into account.  
More formally, let $n'$ and $m'$ denote the number of vertices and edges in a subproblem $H$ prior to the insertion of $e_t$ into $H$.  Then:
\begin{itemize}[topsep=1pt, itemsep=1pt]
\item ({\bf Sparse}) If $m' + 1 < n' \eta^{2/3}$, it inserts $e_t$ to the subproblem $H$ using BK;
\item ({\bf Dense}) if $m'>  n' \eta^{2/3}$, it inserts $e_t$ to subproblem $H$ using BFGT;
\item ({\bf Sparse to dense transition}) if $m' < n' \eta^{2/3}$ and $m' + 1 > n' \eta^{2/3}$, it calls \textsc{Build-BFGT}($H$) first, then inserts $e_t$ into $H$ using BFGT.
\end{itemize} 
We refer to the label within a subproblem assigned by the BFGT or BK algorithm as an \defn{internal label} of the vertex.  

 If after $t$ edges are inserted (for any $t$) we have $\heta > n$ and $t\heta^{1/3} > n^2$, the algorithm ignores all predictions and reverts to using the worst-case BFGT.
 The algorithm creates a new instance of BFGT using the vertices in $G_t$, and inserts all $t$ edges into this BFGT instance one by one.   
All future edges are inserted into this BFGT instance. 

\paragraph{Defining Labels.} For any vertex $v$, let $i(v)$ be the internal label of $v$ in subproblem $H_{\ell(v)}$. 
Let $k$ be a positive integer larger than the internal label of any node in a graph with $n$ vertices in either BFGT or BK (we note that both algorithms maintain only nonnegative labels).
Define the label $L$ of $v$ as $L(v) = k(\ell^a(v) + m-\ell^d(v)) + i(v)$.

\subsection{Analysis}
We analyze the correctness and running time of Ideal Learned Ordering.

\paragraph{Correctness.} First, we show that if a cycle exists, then it is correctly reported by the algorithm.  
 
By Invariant~\ref{inv:levels}, if the insertion of an edge creates a cycle, all vertices in the cycle must have the same level $\ell$. The algorithm maintains the invariant that at all times $H_{\ell(v)}\in \calH(v)$, so all vertices and edges in the cycle must be in some subproblem $H$ and thus will be detected by BFGT or BK.

\begin{lemma}\label{lem:correctideal}
    For any edge $e=(u,v)$ in $G_t$,
    $L(u) \leq
    L(v)$.
\end{lemma}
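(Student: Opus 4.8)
The plan is to establish the inequality $L(u) \le L(v)$ for every edge $e = (u,v)$ in $G_t$ by splitting into two cases depending on whether $u$ and $v$ lie in a common subproblem. Recall that $L(v) = k(\ell^a(v) + m - \ell^d(v)) + i(v)$, where $i(v)$ is the internal label of $v$ in its own subproblem $H_{\ell(v)}$, and $k$ is chosen strictly larger than any internal label that BFGT or BK ever assigns on an $n$-vertex graph. So the dominant term is $k(\ell^a(v) + m - \ell^d(v))$, and the internal label is a low-order correction with $0 \le i(v) < k$.

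First I would handle the case where $\ell(u) \ne \ell(v)$, i.e. $u$ and $v$ are not in a common subproblem with matching levels. By Invariant~\ref{inv:levels}, $\ell^a(u) \le \ell^a(v)$ and $\ell^d(u) \ge \ell^d(v)$, so $\ell^a(u) + m - \ell^d(u) \le \ell^a(v) + m - \ell^d(v)$; since $\ell(u) \ne \ell(v)$ and both quantities $\ell^a$ and $\ell^d$ are integers, this gap in the ``macro'' coordinate $\phi(w) := \ell^a(w) + m - \ell^d(w)$ is at least $1$ (one needs a short argument that if $\ell^a(u) = \ell^a(v)$ then the descendant-level inequality must be strict, and vice versa, using $\ell(u)\neq\ell(v)$ — actually it suffices that at least one of the two inequalities is strict, which follows directly from $\ell(u)\neq\ell(v)$ combined with Invariant~\ref{inv:levels}). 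Then $L(v) - L(u) \ge k\cdot 1 + (i(v) - i(u)) \ge k - (k-1) > 0$ because $i(u) \le k-1$ and $i(v) \ge 0$. So $L(u) < L(v)$ in this case.

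Next I would handle the case $\ell(u) = \ell(v)$, so $\phi(u) = \phi(v)$ and the macro terms cancel, leaving $L(v) - L(u) = i(v) - i(u)$. Here $u$ and $v$ are both in the subproblem $H := H_{\ell(u)} = H_{\ell(v)}$, and the edge $(u,v)$ was inserted into $H$ at the time it arrived (the algorithm maintains the invariant that $H_{\ell(w)} \in \calH(w)$, and inserts each edge into every subproblem containing both endpoints). The internal labels $i(u), i(v)$ are the labels BFGT or BK currently assign to $u$ and $v$ inside $H$. Since each of these worst-case algorithms correctly maintains a topological ordering on its own instance (this is their correctness guarantee, which we invoke as a blackbox), and $(u,v)$ is an edge of that instance, we get $i(u) < i(v)$, hence $L(u) < L(v)$. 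One subtlety: if $H$ underwent a sparse-to-dense transition via \textsc{Build-BFGT}, or if a global \textsc{Build} was triggered by doubling $\heta$, the subproblem was rebuilt from scratch and all its edges reinserted — but after any such rebuild the current instance still contains $(u,v)$ and BFGT/BK is still correct on it, so the conclusion is unaffected. I would also note that if the algorithm has reverted to the global worst-case BFGT (when $\heta > n$ and $t\heta^{1/3} > n^2$), then correctness follows directly from BFGT's correctness, and the label definition degenerates appropriately.

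The main obstacle I anticipate is the bookkeeping around the two cases meeting cleanly: making sure the ``macro gap is at least $1$'' claim in the first case is airtight given only Invariant~\ref{inv:levels} and $\ell(u) \ne \ell(v)$, and making sure that in the boundary situation where $\phi(u) = \phi(v)$ but $\ell(u) \ne \ell(v)$ (which can happen, e.g. $\ell(u) = (a, d)$ and $\ell(v) = (a+1, d+1)$ give the same $\phi$) we correctly observe this cannot occur under Invariant~\ref{inv:levels} for an actual edge, or if it can, that $u$ and $v$ still land in a shared subproblem so the internal-label argument still applies. Resolving this means carefully checking: if $\phi(u) = \phi(v)$ then from $\ell^a(u) \le \ell^a(v)$ and $m - \ell^d(u) \le m - \ell^d(v)$ both differences must be zero, forcing $\ell(u) = \ell(v)$; so in fact $\phi(u) = \phi(v) \iff \ell(u) = \ell(v)$ for edge endpoints, and the case split is exhaustive and disjoint. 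Everything else is routine.
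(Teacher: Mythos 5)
Your proof is correct and takes essentially the same approach as the paper: split into cases, use $i(u) < k$ to absorb the internal label when the macro coordinate $\ell^a + m - \ell^d$ strictly increases across the edge, and invoke the correctness of BFGT/BK on the shared subproblem $H_{\ell(u)}$ when $\ell(u) = \ell(v)$. The paper phrases the case split directly on the macro coordinate rather than on $\ell(u) = \ell(v)$, but as you correctly observe these are equivalent under Invariant~\ref{inv:levels}, and your extra bookkeeping around rebuilds and transitions, while sound, is not needed beyond the paper's appeal to BFGT/BK correctness as a blackbox.
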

\begin{proof}
    If $\ell^a(u) + m -\ell^d(u) < \ell^a(v)+ m-\ell^d(v)$
    then the lemma holds since $i(u) < k$.

    Otherwise, suppose
    $\ell^a(u) + m -\ell^d(u) \geq \ell^a(v)+ m-\ell^d(v)$.
    By Invariant~\ref{inv:levels}, $\ell^a(u)\leq \ell^a(v)$ and $m-\ell^d(u)\leq m-\ell^d(v)$; thus we must have $\ell(u) = \ell(v)$.  Thus, $i(u)$ and $i(v)$ are both assigned by BFGT or BK on $H_{\ell(u)}$. By correctness of BFGT and BK, $i(u) < i(v)$.
\end{proof}

\paragraph{Running Time Analysis.}
We give an overview of the running time analysis of Ideal Learned Ordering.  Proofs are deferred to Appendix~\ref{sec:omitted_proofs}.

Lemma~\ref{lem:sparse_ancestors} bounds the number of ancestors and descendants of a vertex within the graph of any subproblem.

\begin{lemma}%
    \label{lem:sparse_ancestors}
     For any subproblem $H_{j,k}$ and vertex $v \in H_{j,k}$, $v$ has at most $2 (\heta + \etamax)$ ancestor edges and $2 (\heta +  \etamax)$ descendant edges in $H_{j,k}$.
\end{lemma}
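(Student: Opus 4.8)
The plan is to relate the ancestor/descendant structure inside a subproblem $H_{j,k}$ to the ancestor/descendant edge counts $\alpha(\cdot), \delta(\cdot)$ in the final graph $G_m$, and then use the fact that membership in $H_{j,k}$ pins down $\anp(v)$ (hence $\alpha(v)$, up to $\etamax$) to within a window of width $\heta$. Fix a subproblem $H_{j,k}$ and a vertex $v\in H_{j,k}$. First I would observe that every edge that is an ancestor edge of $v$ within the subgraph of $H_{j,k}$ is in particular an ancestor edge of $v$ in $G_m$ (subproblems are subgraphs of $G_t \subseteq G_m$, and reachability in a subgraph implies reachability in the whole graph), so the number of such edges is at most $\alpha(v)$. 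That alone is too weak; the point is to get a lower bound on $\alpha$ of the \emph{lowest-indexed} ancestor to cancel most of $\alpha(v)$.

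The key step: let $u$ be an ancestor of $v$ inside $H_{j,k}$ minimizing $\anp(u)$ (equivalently, intuitively, the ``source-most'' vertex of $v$'s ancestor set within the subproblem). Since $v\in H_{j,k}$ we have $\ell^a(v)\in\{\lceil \anp(v)/\heta\rceil,\lceil\anp(v)/\heta\rceil+1\}$ and the index $j$ equals $\ell^a(v)$; similarly $j = \ell^a(u) \in \{\lceil\anp(u)/\heta\rceil, \lceil\anp(u)/\heta\rceil+1\}$ because $u$ is also in $H_{j,k}$. Comparing these two expressions shows $|\anp(v)-\anp(u)| \le 2\heta$ (the two ceilings differ by at most one, each ceiling loses less than $\heta$, and the $+1$ slack contributes another $\heta$ — I would track the constant carefully, but it comes out to $2\heta$). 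Now convert predicted to true: $\alpha(v) \le \anp(v) + \etamax \le \anp(u) + 2\heta + \etamax \le \alpha(u) + 2\heta + 2\etamax$. Finally, since $u$ is an ancestor of $v$ in $G_m$, every ancestor edge of $u$ in $G_m$ is also an ancestor edge of $v$ in $G_m$, so $\alpha(u)\le \alpha(v)$; but more usefully, the ancestor edges of $v$ lying inside $H_{j,k}$ are a subset of $(\text{ancestor edges of } v \text{ in } G_m) \setminus (\text{ancestor edges of } u \text{ in } G_m)$ together with a bounded correction — I would argue that any ancestor edge of $v$ inside $H_{j,k}$ that is \emph{also} an ancestor edge of $u$ would force a vertex on $u$'s side, contradicting minimality of $\anp(u)$, or handle such edges as a lower-order term. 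Hence the count of ancestor edges of $v$ in $H_{j,k}$ is at most $\alpha(v) - \alpha(u) \le 2\heta + 2\etamax = 2(\heta+\etamax)$. The descendant case is symmetric, using $\ell^d$, the floor expressions, and the analogous ``sink-most'' descendant, so I would just remark that it follows by the same argument with directions reversed.

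The main obstacle I anticipate is making the ``minimality of $\anp(u)$ kills the double-counted edges'' step fully rigorous: an ancestor edge $(x,y)$ of $v$ inside $H_{j,k}$ is counted against $\alpha(v)$, and I want to say it is \emph{not} counted against $\alpha(u)$ so that it survives in $\alpha(v)-\alpha(u)$. This is exactly where I'd need to be careful — either by choosing $u$ to be a vertex inside $H_{j,k}$ from which $v$ is reachable but which has no in-edges inside the relevant part of the structure (so its ancestors in $G_m$ are disjoint from the newly-contributed edges), or by observing directly that the ancestor edges of $v$ in $H_{j,k}$ inject into $\{$edges $e$ in $G_m$ : $e$ is an ancestor edge of $v$ but not of $u\}$, whose size is $\alpha(v)-\alpha(u)$. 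I expect the clean route is the latter: define $u$ as the ancestor of $v$ in $H_{j,k}$ minimizing $\anp(u)$, note every ancestor edge of $u$ in $G_m$ is an ancestor edge of $v$ in $G_m$ (giving $\alpha(u)\le\alpha(v)$ and the injection), and then just bound the surviving difference by $2(\heta+\etamax)$ via the window argument above — tracking only that the window width plus the two $\etamax$ slacks sum to the stated $2(\heta+\etamax)$.
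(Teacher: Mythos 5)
Your high-level plan --- compare $v$ to a carefully chosen ancestor $u$ inside $H_{j,k}$, use membership in $H_{j,k}$ to pin $\anp(u)$ and $\anp(v)$ to within $2\heta$ of each other, slide each prediction to the true count at cost $\etamax$ each, and read off a bound on $v$'s ancestor edges inside $H_{j,k}$ from the difference $\alpha(v)-\alpha(u)$ --- is exactly the paper's. The gap is in your choice of $u$ and the resulting disjointness claim. You take $u$ to minimize $\anp(u)$ among $v$'s ancestors in $H_{j,k}$. But such a $u$ can still have in-edges inside $H_{j,k}$; any such edge is an ancestor edge of both $u$ and $v$ inside $H_{j,k}$, hence is counted in ``ancestor edges of $v$ inside $H_{j,k}$'' \emph{and} in $\alpha(u)$, so it does not survive in $\alpha(v)-\alpha(u)$. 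The injection you want therefore fails. The ``minimality of $\anp(u)$'' argument you sketch as a patch is also not correct: $\anp$ is only a prediction and need not be monotone along ancestor chains, so an ancestor $y$ of $u$ inside $H_{j,k}$ with $\anp(y)>\anp(u)$ is entirely possible and yields no contradiction.

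The alternative you mention in passing is what the paper actually does, and you should commit to it: take $u$ to be an ancestor of $v$ inside $H_{j,k}$ that has \emph{no in-edges inside $H_{j,k}$}, i.e.\ a source of $v$'s ancestor set within the subproblem, found by recursively following in-edges of $v$ inside the subgraph (this exists because the subgraph is acyclic). With that choice $u$ has zero ancestor edges inside $H_{j,k}$; the paper then asserts that no ancestor edge of $v$ in $H_{j,k}$ is an ancestor edge of $u$ in $G_m$, which combined with the containment of $u$'s ancestor edges in $v$'s gives $\alpha(v)\ge\alpha(u)+k_v$, where $k_v$ is the number of ancestor edges of $v$ inside the subproblem. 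Your window-plus-error bookkeeping then goes through unchanged and gives the stated $2(\heta+\etamax)$ bound, and the descendant case is symmetric as you say. Two smaller points to state explicitly: the count should be taken on the subproblem \emph{after all of its edges have been inserted} and then transferred to earlier times by monotonicity of ancestor-edge counts; and you only need the one-sided inequality $\anp(v)-\anp(u)\le 2\heta$, not the absolute value.
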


Lemma~\ref{lem:sparse_heta} shows that the estimate $\heta$ maintained by the algorithm is never more than $2 \etamax$.
\begin{lemma}
\label{lem:sparse_heta}
At all times, $\heta \leq 2\etamax$
\end{lemma}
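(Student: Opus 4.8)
The plan is to show that $\heta$ never exceeds $2\etamax$ by a contradiction-via-minimality argument on the first time the estimate overshoots. Recall the algorithm only doubles $\heta$ (from some value $\heta_{i-1}$ to $\heta_i = 2\heta_{i-1}$) when, after updating the ancestor/descendant levels for a newly inserted edge, some vertex $v$ has its updated level $\ell(v)$ fall outside its set of \emph{possible levels} determined by the \emph{current} estimate $\heta_{i-1}$. Since $\heta$ starts at $1$ and only ever doubles, it suffices to prove: if $\heta \le \etamax$ at the time an edge is inserted, then no vertex's updated level leaves its possible-level window, hence $\heta$ is not doubled past $2\etamax$. Concretely, the largest value $\heta$ can take is $2\etamax$ — reached only if it was $\le \etamax$ and then doubled — and we show a doubling from any value $\heta \le \etamax$ is impossible, so once $\heta$ would reach $2\etamax$ it can grow no further.

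The core estimate is a ``true level vs.\ predicted level'' comparison. Fix a vertex $v$ at a time when the estimate is $\heta \le \etamax$. By Invariant~\ref{inv:levels}, the ancestor level $\ell^a(v)$ only ever increases, and by the update rule it equals $\ell^a(u)$ for some ancestor $u$ of $v$ in the current graph (the vertex whose level was propagated into $v$); similarly $\ell^d(v) = \ell^d(u')$ for some descendant $u'$. Tracing back, $\ell^a(v) = \lceil \anp(u)/\heta\rceil$ or $\lceil \anp(u)/\heta\rceil + 1$ for the relevant ancestor $u$ (this is exactly the structure used in the LDFS analysis, Invariant~\ref{inv:label_is_max}, and in Lemma~\ref{lem:sparse_ancestors}). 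Now I would bound $\anp(u)$ against $\anp(v)$: since every ancestor of $u$ is an ancestor of $v$, we have $\alpha(u) \le \alpha(v)$, and combining with $|\anp(u) - \alpha(u)| \le \etamax$ and $|\anp(v) - \alpha(v)| \le \etamax$ gives $\anp(u) \le \anp(v) + 2\etamax \le \anp(v) + 2\heta$. Dividing by $\heta$ and taking ceilings, $\lceil \anp(u)/\heta \rceil \le \lceil \anp(v)/\heta \rceil + 2$; adding the possible $+1$ slack this is at most $\lceil \anp(v)/\heta \rceil + 3$, which is — up to constant slack — inside the allowed window $\{\lceil \anp(v)/\heta\rceil, \lceil \anp(v)/\heta\rceil + 1\}$. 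I will need to be a little careful here: the window only has width one, so the clean statement is that $\ell^a(v)$ stays within an $O(1)$-width band and the actual constants in the ``possible level'' definition must be matched exactly; I expect the intended reading is that the relevant comparison is between $\heta \le \etamax$ (not $\heta = \etamax$) so that $2\etamax/\heta \ge 2$ forces a doubling only when $\heta < \etamax$, i.e.\ the estimate halts at the first power of two that is $\ge \etamax$, which is $\le 2\etamax$. The symmetric argument with floors handles $\ell^d(v)$, using that every descendant of $u'$ is a descendant of $v$ so $\delta(u') \le \delta(v)$ and $\dep(u') \le \dep(v) + 2\etamax$.

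Putting it together: let $\heta^\star$ be the value of the estimate just before a hypothetical doubling that would push it above $2\etamax$; then $\heta^\star > \etamax$, so $\heta^\star \ge \etamax$ and in fact $\heta^\star$ is a power of two that is already $\ge \etamax$. But by the estimate above, with $\heta = \heta^\star \ge \etamax$ the updated ancestor and descendant levels of every vertex remain within their possible-level windows (the $2\etamax \le 2\heta^\star$ slack fits), so the condition triggering a doubling is never met — contradiction. Hence $\heta$ never exceeds $2\etamax$.

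The main obstacle I anticipate is matching the constants in the ``possible levels'' definition precisely: the windows have width exactly one for $\ell^a$ and one for $\ell^d$, yet the naive bound above loses a factor of roughly two in the ceiling arithmetic. Resolving this cleanly requires either (i) observing that $\ell^a(v)$ is always $\lceil \anp(u)/\heta\rceil$ for a \emph{specific} propagating ancestor $u$ and that $\anp(u) \le \anp(v) + 2\etamax \le \anp(v) + \heta$ once $\heta \ge 2\etamax$ — wait, that is the wrong direction — or (ii) more carefully, arguing inductively over the run that at every intermediate estimate value $\heta$ the invariant ``$\ell^a(v) - \lceil\anp(v)/\heta\rceil \in \{0,1\}$'' is preserved by each level-propagation step as long as $\heta \ge \etamax$, by re-deriving the $+1$ slack from a single propagation rather than from the global ancestor bound. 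I would carry out the proof in the latter, inductive form: show that each propagation step increases $\ell^a(v)$ by at most enough to stay in-window provided $\heta \ge \etamax$, so the doubling rule is never triggered once $\heta$ first reaches a power of two $\ge \etamax$, and that threshold is at most $2\etamax$.
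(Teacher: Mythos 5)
Your high-level approach is the same as the paper's: argue at the moment a doubling is triggered, identify the ancestor $u$ whose prediction propagated into $\ell^a(v)$, and use $\alpha(u)\leq\alpha(v)$ together with $|\anp(\cdot)-\alpha(\cdot)|\leq\etamax$ to bound $\anp(u)-\anp(v)\leq 2\etamax$. That key inequality is correct and is exactly the engine of the paper's proof. But as you yourself note, your write-up leaves a genuine gap in the constant accounting, and neither of your sketched fixes is carried out. The source of the gap is that you only trace the level propagation one step back and allow the ancestor $u$ to already sit at its $+1$ slack, i.e., $\ell^a(v)\in\{\lceil\anp(u)/\heta\rceil,\lceil\anp(u)/\heta\rceil+1\}$. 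That gives $\ell^a(v)\leq\lceil\anp(v)/\heta\rceil+3$, a window of width three, which cannot be squeezed into the width-one allowed window no matter how you juggle $\heta$ versus $\etamax$.

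The missing idea is to trace the propagation chain \emph{all the way back}: whenever $\ell^a(v)$ is updated it is overwritten by some ancestor's value, and following that chain you eventually reach a source vertex $w$ (an ancestor of $v$) whose ancestor level was set directly at \textsc{Build} time to exactly $\lceil\anp(w)/\heta\rceil$, with no $+1$. Then the violation condition $\ell^a(v)>\lceil\anp(v)/\heta\rceil+1$ becomes $\lceil\anp(w)/\heta\rceil\geq\lceil\anp(v)/\heta\rceil+2$, which forces $\anp(w)-\anp(v)>\heta$; combining with $\alpha(v)-\alpha(w)\geq 0$ gives $(\anp(w)-\alpha(w))+(\alpha(v)-\anp(v))>\heta$, so one of $\eta_w,\eta_v$ exceeds $\heta/2$, hence $\heta<2\etamax$ before the doubling. (This is also where you should pick $v$ minimal, so that its ancestors have not yet violated their windows and the propagation chain is well-behaved.) Separately, your opening paragraph inverts the logic --- you state ``if $\heta\leq\etamax$ then no doubling occurs'' when you mean ``if $\heta\geq\etamax$ then no doubling occurs''; you silently correct this in your final contradiction paragraph, but the two paragraphs as written contradict each other and should be reconciled.
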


Lemma~\ref{lem:bender_ancestors_running_time} and Lemma~\ref{lem:sparse_ancestors_running_time} bound the cost of running BFGT and BK any subproblem respectively.

\begin{lemma}%
\label{lem:bender_ancestors_running_time}
    Consider a subproblem $H$ with $n'$ vertices and $m'$ edges that are inserted into $H$ one by one.  If each vertex in $H$ has at most $O(\etamax)$ edge ancestors, then the total running time of running BFGT on $H$ is $\tilde{O}(n'\etamax + m')$ time.
\end{lemma}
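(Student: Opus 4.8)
The plan is to reduce the claim to the known internal running-time bound for BFGT, and then translate the hypothesis on \emph{edge} ancestors into a bound on the \emph{levels} that BFGT maintains. Concretely, I would start by recalling from Bender et al.\ (as summarized in Section~\ref{sec:related}) that inserting $m'$ edges one at a time into an instance of BFGT on $n'$ vertices costs $\tilde{O}\bigl(m' + \sum_{v\in V_H}\ell(v)\bigr)$ total time, where $\ell(v)$ denotes the final level of $v$ (and if some insertion creates a cycle in $H$, BFGT detects it and halts, still within this bound). The structural fact I need alongside this is that at every point in time $\ell(v)$ is an underestimate of the number of ancestor \emph{vertices} of $v$ in the current graph; since ancestor sets only grow under edge insertions, $\ell(v)$ is at all times at most the number of ancestor vertices of $v$ in the final subproblem graph $H$.

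Next I would bound the number of ancestor vertices of a vertex by its number of ancestor edges plus one. For each ancestor $u\neq v$ of $v$, fix a directed path from $u$ to $v$ and let $(u,w_u)$ be its first edge; then $w_u$ is an ancestor of $v$, so $(u,w_u)$ is an ancestor edge of $v$, and the assignment $u\mapsto (u,w_u)$ is injective because an edge determines its tail. Hence the number of ancestors of $v$ (including $v$ itself) is at most the number of ancestor edges of $v$, plus one. Using $\etamax\geq 1$ and the hypothesis that every $v\in H$ has $O(\etamax)$ ancestor edges, it follows that every $v\in H$ has $O(\etamax)$ ancestor vertices in the final graph of $H$.

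Combining the two observations, $\ell(v)=O(\etamax)$ for every $v\in V_H$ at every time, so $\sum_{v\in V_H}\ell(v)=O(n'\etamax)$, and plugging into the BFGT bound gives total running time $\tilde{O}(m' + n'\etamax)$, as required. (A small extra remark: one should confirm the edge-ancestor hypothesis is applied to the \emph{final} state of $H$; the monotonicity of ancestor sets makes this harmless, since any intermediate ancestor count is dominated by the final one.)

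I do not expect a serious obstacle here. The one point requiring care is to invoke the refined BFGT guarantee---that levels are underestimates of ancestor counts and that total time is $\tilde{O}(m' + \sum_v \ell(v))$---rather than just the black-box $\tilde{O}(n'^2)$ bound, and to make sure monotonicity of ancestor sets lets us pass from ``current graph'' to ``final $H$.'' The edge-versus-vertex ancestor conversion is routine but worth stating explicitly, since the lemma's hypothesis is phrased in terms of edges while BFGT's levels track vertices.
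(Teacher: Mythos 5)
Your proposal is correct and follows the same essential strategy as the paper: bound the BFGT running time in terms of vertex levels, then bound the levels by ancestor counts, then translate the lemma's hypothesis on ancestor \emph{edges} into a bound on ancestor \emph{vertices} (as you note, $\#\text{ancestor vertices} \leq \#\text{ancestor edges} + 1$). The one place the paper does more work is that it does not invoke the ``$\tilde{O}(m' + \sum_v \ell(v))$'' guarantee as a black box; instead it opens up BFGT's accounting and bounds two separate quantities that each charge for edge traversals: the number of vertex-level increases (giving $\sum_v \ell(v) = O(n'\etamax)$) and the number of \emph{vertex-count} increases (a second internal counter of BFGT, bounded by $O(\max_v \ell(v))$ per vertex via a doubling argument, hence $O(n'\etamax)$ total). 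Your black-box invocation elides this second term. This is harmless here, because your uniform bound $\ell(v) = O(\etamax)$ simultaneously bounds $\max_v \ell(v)$, so both terms come out to $O(n'\etamax)$ regardless; but if you want the argument to be airtight against the precise statement of BFGT's internal charging scheme, you should either cite a form of the guarantee that is explicitly $\tilde{O}(m' + \sum_v \ell(v))$ or, like the paper, observe that the auxiliary vertex counts are also bounded in terms of the maximum level.
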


\begin{lemma}%
\label{lem:sparse_ancestors_running_time}
    Consider a subproblem $H$ with $n'$ nodes and $m'$ edges, such that:
    (1) $m' < \heta^{2}n'/\log^2 n'$, 
        (2)  each vertex in $H$ has at most $O(\etamax)$ edge ancestors and $O(\etamax)$ edge descendants, and
        (3) $\heta = O(\etamax)$.
    Then running BK on $H$ with parameter $\tau =  n^{1/3}\heta^{2/3}/m^{1/3}$ takes total time $\tilde{O}(m' \etamax^{1/3})$ in expectation.
\end{lemma}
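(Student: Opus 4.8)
The plan is to bound the running time of BK on a subproblem $H$ by carefully re-examining the cost of its two phases under the three given hypotheses. Recall from Section~\ref{sec:related} that BK samples each vertex with probability $\Theta(\log n'/\tau)$ and groups vertices into levels $(i,j)$ according to how many sampled ancestors and sampled descendants they have; the parameter $\tau$ controls the number of non-sampled ancestors/descendants a vertex can have within a level. First I would use hypothesis (2): since every vertex in $H$ has at most $O(\etamax)$ ancestor edges and $O(\etamax)$ descendant edges, the whole subproblem has $m' = O(\etamax \cdot n')$ — but more usefully, the work BK does per vertex in its level-rebuilding and search steps is governed by the number of ancestors/descendants within a level, which is $O(\tau)$ by design, together with the number of sampled vertices, which concentrates around $m' \log n'/\tau$ (Phase I) and the cost of maintaining levels across insertions (Phase II).

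The key step is to plug in $\tau = n^{1/3}\heta^{2/3}/m^{1/3}$ (here $n,m$ are presumably the global $n,m$, or $n',m'$ — I would state this carefully) and verify that the two competing terms balance. Schematically, Phase I costs roughly $\tilde{O}(m' \tau)$ (scanning $O(\tau)$ non-sampled vertices per edge within a level) while Phase II costs roughly $\tilde{O}(m'^2/\tau^2 \cdot \text{something} + n'^2/\tau^2)$ from the sampled-subgraph maintenance; hypothesis (1), $m' < \heta^2 n'/\log^2 n'$, is exactly what is needed so that the $n'^2/\tau^2$-type term is dominated by the $m'$-type term once $\tau$ is substituted. With $\heta = O(\etamax)$ from hypothesis (3), $\tau = \Theta(n'^{1/3}\etamax^{2/3}/m'^{1/3})$ up to constants, so $m'\tau = \tilde{O}(m'^{2/3} n'^{1/3}\etamax^{2/3})$, and since $m' \le \etamax n'$ gives $m'^{2/3}n'^{1/3} \le m' \etamax^{-1/3}\cdot\etamax^{1/3}$... more directly $m'^{2/3}n'^{1/3}\etamax^{2/3} \le m'^{2/3}(m'/1)^{1/3}\etamax^{1/3} = m'\etamax^{1/3}$ after using $n' \le m'/\text{(avg degree)}$ — I would instead derive $n' \le m'$ is false, so the honest route is: from (2), $m' = O(n'\etamax)$, hence $n'^{1/3} \le (m'/\etamax)^{1/3}\cdot(\text{const})$, giving $m'\tau = \tilde O(m'^{2/3}(m'/\etamax)^{1/3}\etamax^{2/3}) = \tilde O(m'\etamax^{1/3})$. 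The symmetric descendant bound is identical. Summing the balanced Phase I and Phase II costs then yields the claimed $\tilde{O}(m'\etamax^{1/3})$ in expectation, where the expectation is over BK's internal sampling.

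The main obstacle I anticipate is that BK's published running-time bound is stated in aggregate ($\tilde O(m^{4/3})$) rather than as a clean per-subproblem formula in terms of $\tau$, the ancestor/descendant bounds, and $m',n'$ separately; so the real work is to open up their Phase I and Phase II analyses and re-derive the intermediate bounds, replacing their global ``number of ancestors is at most $n$'' estimates with our sharper ``at most $O(\etamax)$'' estimates from hypothesis (2), and confirming that setting $\tau$ as above is legitimate (in particular that $\tau \ge 1$, which needs $n'\heta^2 \ge m'$ — precisely hypothesis (1) up to the $\log^2$ factor). A secondary subtlety is that BK is randomized and the sampling must still give the correct concentration when $\tau$ is chosen based on the prediction-derived quantities rather than the true graph parameters; I would argue this goes through because $\heta = O(\etamax)$ and the ancestor/descendant counts are genuinely $O(\etamax)$, so $\tau$ is within a constant factor of a ``valid'' choice for the actual subgraph. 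I expect this verification — that the concentration bounds and the sparse-case precondition of BK are met with the new $\tau$ — to be the most delicate part, with the arithmetic balancing of the two phase costs being comparatively routine once the per-phase bounds are in hand.
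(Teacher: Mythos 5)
Your high-level plan is the same as the paper's: reopen BK's Phase~I/Phase~II analysis, replace the generic ``ancestors $\le n'$'' estimate with the sharper $O(\etamax)$ bound from hypothesis~(2), substitute the prediction-driven $\tau$, and use hypothesis~(1) to certify that $\tau$ is a valid sampling parameter. That framing is correct. But the specific per-phase cost formulas you propose are not BK's, and that is a genuine gap, not a polish-later detail.

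BK's published running time, as the paper states it, is $\tilde{O}\bigl(m'n'/\tau + n'^2/\tau + \sqrt{m'^3\tau/n'} + \sqrt{m'n'\tau}\bigr)$. You instead posit ``Phase~I $\approx m'\tau$'' and ``Phase~II $\approx m'^2/\tau^2 + n'^2/\tau^2$,'' which match none of these four terms; the arithmetic you then carry out (using $m' = O(n'\etamax)$ to show $m'\tau = \tilde O(m'\etamax^{1/3})$) verifies a bound for a quantity that does not appear in BK, while leaving the actual four terms unchecked. The paper's real work is to show that the \emph{first two} terms can be tightened from $m'n'/\tau$ and $n'^2/\tau$ to $m'\etamax/\tau$ and $n'\etamax/\tau$, and this requires two concrete ingredients you do not identify: (i) a tighter potential-function analysis of Italiano's incremental-reachability structure, restricted to sampled vertices, so that the Phase~I cost of maintaining each sampled vertex's descendants is bounded by its $O(\etamax)$ descendant edges rather than by $m'$; and (ii) BK's Lemma~2.3 per-vertex decomposition of Phase~II cost as $\tilde O(A_S(v)+D_S(v))$ (sampled ancestors/descendants of $v$), so that hypothesis~(2) gives $\tilde O(\etamax/\tau)$ per vertex instead of $\tilde O(n'/\tau)$. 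Without these, the lemma does not follow. Finally, the validity requirement on $\tau$ is not $\tau \ge 1$ but $\tau = \Omega(\log n')$ (BK samples at rate $\Theta(\log n'/\tau)$), which is what hypothesis~(1) is calibrated to deliver; your weaker ``$\tau\ge 1$'' reading would not suffice.
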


Finally, Theorem~\ref{thm:dense_running_time} analyzes the total running time.

\begin{theorem}%
\label{thm:dense_running_time}
Ideal Learned Ordering has total expected running time $\tilde{O}(\min\{m\etamax^{1/3}, n\etamax, n^2\})$.
\end{theorem}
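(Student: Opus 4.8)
The plan is to split the total running time into three parts---(i) maintaining the ancestor/descendant levels $\ell^a,\ell^d$ under insertions, (ii) running BK or BFGT inside the $O(n)$ subproblems, and (iii) the worst-case BFGT fallback---and to bound their sum against each of $n\etamax+m$, $m\etamax^{1/3}$, and $n^2$ separately, so that the claimed $\min$ follows. Throughout I would use three structural facts: every vertex lies in at most four subproblems, so, writing $n_H$ and $m_H$ for the numbers of vertices and edges ever inserted into a subproblem $H$, $\sum_H n_H=O(n)$ and $\sum_H m_H=O(m)$; by Lemma~\ref{lem:sparse_ancestors} together with Lemma~\ref{lem:sparse_heta}, every vertex has $O(\etamax)$ ancestor edges and $O(\etamax)$ descendant edges inside any subproblem; and by Lemma~\ref{lem:sparse_heta} the estimate $\heta$ only ever doubles, from $1$ up to at most $2\etamax$, so \textsc{Build} runs $O(\log\etamax)$ times. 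Since each \textsc{Build} re-inserts at most $m$ edges, this reduces the whole analysis to a single ``epoch'' (a maximal run with $\heta$ fixed) at the cost of a logarithmic factor, absorbed into $\tilde{O}(\cdot)$.

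For part (i): a level leaving its set of possible values immediately triggers a doubling and hence ends the epoch, so within an epoch $\ell^a(v)$ rises at most once and $\ell^d(v)$ falls at most once; each change costs $O(\deg(v))$, so the per-epoch level-maintenance cost is $O(\sum_v\deg(v))=O(m)$, i.e.\ $\tilde{O}(m)$ overall. (One should check that aborting an in-progress search on a doubling does not hurt this amortization, which is immediate since the aborted work is redone in the next \textsc{Build}.)

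For part (ii) I would argue subproblem by subproblem. Within an epoch a subproblem starts empty, is served by BK while sparse ($m_H<n_H\heta^{2/3}$), and once it crosses the threshold is rebuilt by \textsc{Build-BFGT} and served by BFGT; by Lemmas~\ref{lem:sparse_ancestors} and~\ref{lem:sparse_heta} the hypotheses of Lemmas~\ref{lem:bender_ancestors_running_time} and~\ref{lem:sparse_ancestors_running_time} hold, so the BFGT part costs $\tilde{O}(n_H\etamax+m_H)$ and the BK part costs $\tilde{O}(m_H\etamax^{1/3})$; moreover while $H$ is sparse $m_H<n_H\heta^{2/3}=O(n_H\etamax^{2/3})$, making the BK cost also $O(n_H\etamax)$. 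Hence every subproblem costs $\tilde{O}(n_H\etamax+m_H)$ (summing to $\tilde{O}(n\etamax+m)$) and every sparse subproblem costs $\tilde{O}(m_H\etamax^{1/3})$ (summing to $\tilde{O}(m\etamax^{1/3})$, modulo the refinement discussed below for subproblems that turn dense), while a dense subproblem can also be charged the worst-case $\tilde{O}(n_H^2)$, with $\sum_H n_H^2\le(\max_H n_H)(\sum_H n_H)=O(n^2)$. Using $\sum_H\min\{a_H,b_H\}\le\min\{\sum_H a_H,\sum_H b_H\}$, these combine to $\tilde{O}(\min\{n\etamax+m,m\etamax^{1/3},n^2\})$ once the fallback is handled. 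For part (iii): when the fallback fires it rebuilds one BFGT instance on $\le n$ vertices and $\le m$ edges in $\tilde{O}(n^2)$ time, and its triggering rule $\heta>n$ and $t\heta^{1/3}>n^2$ is calibrated---via $\heta\le2\etamax$---so that at that moment $n^2=O(n\etamax+m)$ and $n^2=O(m\etamax^{1/3})$; hence neither the fallback nor the (at most full) subproblem work preceding it is charged more than the minimum.

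I expect the crux to be the bookkeeping between the estimate $\heta$ and the true error $\etamax$: the threshold and the BK parameter $\tau$ are defined through $\heta$, yet $\heta$ can be much smaller than $\etamax$ when the badly predicted vertices are never touched by an insertion---so to push the $\tilde{O}(m\etamax^{1/3})$ and $\tilde{O}(n^2)$ bounds through I would really carry the per-subproblem costs in terms of $\heta$ (e.g.\ BK on a sparse $H$ costing $\tilde{O}(m_H\heta^{1/3})$, which yields Lemma~\ref{lem:sparse_ancestors_running_time}'s $\etamax^{1/3}$ bound via $\heta\le2\etamax$ but also bounds the total sparse work by $\tilde{O}(n\heta)\le\tilde{O}(n^2)$ when $\heta\le n$, leaving the rest to the fallback), and I would need a matching refinement (or a sharpening of Lemma~\ref{lem:sparse_ancestors} to $O(\heta)$ edge-ancestors within a subproblem) to keep the dense-subproblem BFGT work inside $\tilde{O}(m\etamax^{1/3})$ when that term is the minimum. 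A secondary nuisance is verifying that the algorithm's concrete sparse/dense threshold, with its $\log$ slack, really implies hypothesis~(1) of Lemma~\ref{lem:sparse_ancestors_running_time}.
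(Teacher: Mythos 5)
Your plan follows the paper's proof closely in its overall structure: both (i) bound the level‑update cost by $\tilde O(m)$ per value of $\heta$ (four possible levels per vertex, hence each edge traversed $O(1)$ times per $\heta$), (ii) bound the per‑subproblem BK/BFGT cost and sum over the $O(n)$ subproblems and $O(\log\etamax)$ values of $\heta$, and (iii) invoke the fallback for the $n^2$ term. The paper's single per‑subgraph bound is $\tilde O(\min\{n'\heta,\, m'\heta^{1/3}\})$ (obtained by splitting the sparse prefix of the insertion sequence from the post‑\textsc{Build-BFGT} dense suffix, exactly as you sketch), which it then sums and converts to $\etamax$ via Lemma~\ref{lem:sparse_heta}; the fallback handles $n^2$ by observing that while $\heta\le n$ the running total is $\tilde O(n\heta)\le\tilde O(n^2)$, and that once $\heta>n$ and $t\heta^{1/3}>n^2$ the switch to plain BFGT costs another $\tilde O(n^2)$.

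Two points of genuine difference and one caution. First, your $n^2$ bound via $\sum_H n_H^2\le(\max_H n_H)(\sum_H n_H)=O(n^2)$ per epoch is a slightly different (and arguably more self‑contained) route than the paper's ``$n\heta\le n^2$ while $\heta\le n$, then hand off to the fallback''; both are fine. Second, the ``crux'' you identify about $\heta$ versus $\etamax$ is a real imprecision shared with the paper rather than an obstacle only for you: the paper writes the dense per‑subproblem cost as $O(n'\heta)$ citing Lemma~\ref{lem:bender_ancestors_running_time}, but that lemma only yields $\tilde O(n'\etamax+m')$ under the $O(\etamax)$ edge‑ancestor bound of Lemma~\ref{lem:sparse_ancestors}, and these differ when $\heta\ll\etamax$. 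The caution is about your proposed repair: a sharpening of Lemma~\ref{lem:sparse_ancestors} to $O(\heta)$ edge‑ancestors inside a subproblem is \emph{not} true in general. Consider a vertex $v$ with $\tilde\alpha(v)=0$ but $\alpha(v)=\etamax$, whose parents $p_1,\dots,p_{\etamax}$ are sources with perfect predictions: all of them, and $v$, land in $H_{0,\cdot}$, and inserting the edges $(p_i,v)$ never updates any level (since $\ell^a(p_i)=\ell^a(v)=0$), so $\heta$ stays $1$ while $v$ accumulates $\etamax$ ancestor edges inside $H_{0,\cdot}$. So that particular refinement cannot be the fix; one has to carry the $\heta+\etamax$ bound and handle the dense‑subproblem vs.\ $m\etamax^{1/3}$ comparison differently (or accept the paper's looser route). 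Beyond this shared looseness, though, your decomposition, the structural facts you invoke ($\sum_H n_H=O(n)$, $\sum_H m_H=O(m)$, $\heta\le2\etamax$, $O(\log\etamax)$ rebuilds), and the calibration argument for the fallback all match the paper's argument.
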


\section{Experiments}\label{sec:experiments}

This section presents experimental results for the Learned DFS Ordering (LDFS) described in Section~\ref{sec:dfs}.  Our experiments show that using prediction significantly speeds up performance over baseline solutions on real temporal data. Moreover, only a small amount of training dataset (e.g., 5\%) is sufficient to see one or two orders of magnitude of improvement.  Finally, we show that LDFS is extremely robust to errors in the predictions.  

Our implementation and datasets can be found at \url{https://github.com/AidinNiaparast/Learned-Topological-Order}.

\paragraph{Algorithms.}
We compare LDFS against two natural baseline solutions that we call DFS I and DFS II. Each of the three algorithms use a greedy depth-first-search approach to maintain a topological ordering, with the difference that LDFS warm-starts its levels using predictions.   

\paragraph{DFS I.} The first algorithm is equivalent to LDFS with zero predictions: that is, $\anp(v) = 0$ for each $v$.

\paragraph{DFS II.} This algorithm was presented by~\cite{marchetti1993line} for incremental topological ordering and revisited by~\cite{franciosa1997incremental} for incremental DFS.  It has total update time $O(mn)$. \cite{baswana2018incremental} perform an empirical study on incremental DFS algorithms and show that DFS II (which they call FDFS) is the state-of-the-art on DAGs. DFS II maintains exactly one vertex at each level. When an edge $(u,v)$ is inserted, if $l(v) < l(u)$, the algorithm performs a partial DFS to detect all the vertices $w$ reachable from $v$ such that $l(v) < l(w) < l(u)$, and updates their levels to be larger than $l(u)$.  

\paragraph{Datasets.} We use real directed temporal networks from the SNAP Large Network Dataset Collection~\cite{snapnets}. To obtain the final DAG $G$, we randomly permute the vertices and only keep the edges that go from smaller to larger positions (this ensures $G$ is acyclic). Then, we sort the edges in increasing order of their timestamps to obtain the sequence of edge insertions.  Table~\ref{table:dataset features} summarizes these datasets.  Note that these graphs are sparse.

\paragraph{Predictions.} To generate the predictions for LDFS, we use a contiguous portion of the input sequence as the training set.  Consider the graph that results from inserting the training set edges into an empty graph.
For each node $v$, we define $\anp(v)$ to be the number of $v$'s ancestor edges in that graph. 

\paragraph{Experimental Setup and Results.}
On real datasets, we compare LDFS to DFS I and II in terms of the number of edges and vertices processed (\defn{cost}) in Table~\ref{table:cost} and in terms of runtime in Table~\ref{table:time}.  The last 50\% of the data in increasing order of the timestamps is used as the test data in all of the experiments in Table~\ref{table:exp}. The training data for LDFS is a contiguous subsequence of the data that comes right before the test data.

We include plots for the email-Eu-core\footnote{https://snap.stanford.edu/data/email-Eu-core-temporal.html}~\cite{paranjape2017motifs} dataset, which contains the email communications in a large European research institution.  A directed edge $(u, v, t)$ in this dataset means that $u$ sent an e-mail to $v$ at time $t$. 
Figure~\ref{fig:email-Eu-core scale training set} shows how the training data size affects the runtime of LDFS. Figure~\ref{fig:email-Eu-core robustness} is
a robustness experiment showing performance versus the noise added to predictions.

For testing robustness to prediction error, we add noise to the predictions. We first generate predictions as described.
Then, we calculate the standard deviation of the prediction error, which we denote by \textsc{SD}(predictions).  Finally, 
we add a normal noise with mean 0 and standard deviation \textsc{SD}(noise) = $C\cdot$ \textsc{SD}(predictions) (for some constant $C$) independently to all of the predictions to obtain our noisy predictions. We repeat the experiment 10 times, each time regenerating the noisy predictions; we plot the mean
and standard deviation of the resulting running time in Figure~\ref{fig:email-Eu-core robustness}.  

\begin{table}[ht]
    \small
    \centering
    \caption{The number of nodes and edges in the real datasets from SNAP. The input sequence has duplicate edges (referred to as temporal edges). The length of the sequence is the number of temporal edges. Static edges are the number of distinct edges.}
    \vspace*{0.1in}
     \begin{tabular}{c|c c c} 
         \textbf{} & Nodes & Static Edges & Temporal Edges 
         \\ [0.5ex] 
         \hline
         \rule{0pt}{2.5ex}
         Email-Eu-core & 918 & 12320 & 171617\\
         CollegeMsg & 1652 & 9790 & 27931\\
         Math Overflow & 14839 & 45267 & 53499\\
     \end{tabular}
    \label{table:dataset features}
\end{table}

\begin{table}[ht]
    \small
    \centering
    \caption{Performance of LDFS against DFS I and II. The test data is the last 50\% of the data. 
    Columns LDFS(5) and LDFS(50) correspond to the performance of LDFS when 5\% and 50\% of the data are used for training, respectively.}
    \vspace*{0.1in}
    \begin{subtable}{\linewidth}
        \small
        \centering
        \begin{tabular}{c|cccc}
             \textbf{} & LDFS(5) & LDFS(50) & DFS I & DFS II 
             \\ [0.5ex] 
             \hline
             \rule{0pt}{2.5ex}
             Email-Eu-core & 8.4e+3 &  2.6e+3 & 5.0e+5 & 3.1e+5\\
             CollegeMsg & 9.6e+3 & 5.4e+3 & 1.2e+5 & 6.8e+5\\
             Math Overflow  & 4.4e+4 &  2.8e+4 & 2.8e+5 & 3.9e+7 \\
        \end{tabular}
        \caption{Cost (\# nodes and edges processed)}
        \label{table:cost}
    \end{subtable}
    \begin{subtable}{\linewidth}
        \small
         \centering
         \begin{tabular}{c|cccc}
             \textbf{} & LDFS(5) & LDFS(50) & DFS I & DFS II 
             \\ [0.5ex] 
             \hline
             \rule{0pt}{2.5ex}
             Email-Eu-core & 0.071 & 0.078 & 0.274 & 0.226\\
             CollegeMsg & 0.021 & 0.016 & 0.101 & 0.336\\
             Math Overflow & 0.094 & 0.078 & 0.241 & 18.373\\
         \end{tabular}
         \caption{Running Time (s)}
         \label{table:time}
    \end{subtable}
\label{table:exp}
\end{table}

\begin{figure}[ht]
    \centering
    \begin{subfigure}{\linewidth}
        \centering  \includegraphics[width=.9\linewidth, height=3.6cm]{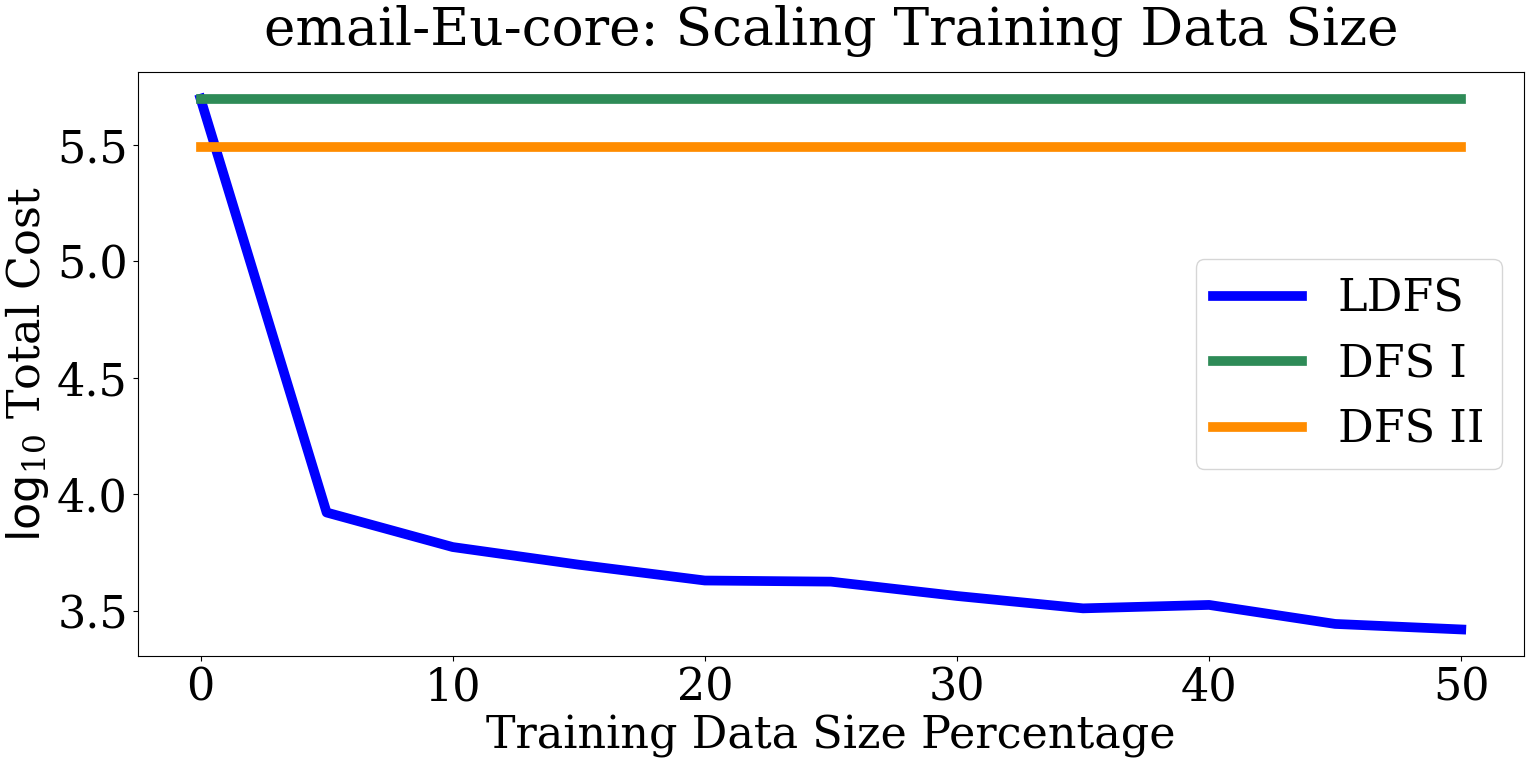}
        \caption{}
        \label{fig:email-Eu-core scale training set}
    \end{subfigure}
    \begin{subfigure}{\linewidth}
        \centering \includegraphics[width=.9\linewidth, height=3.6cm]{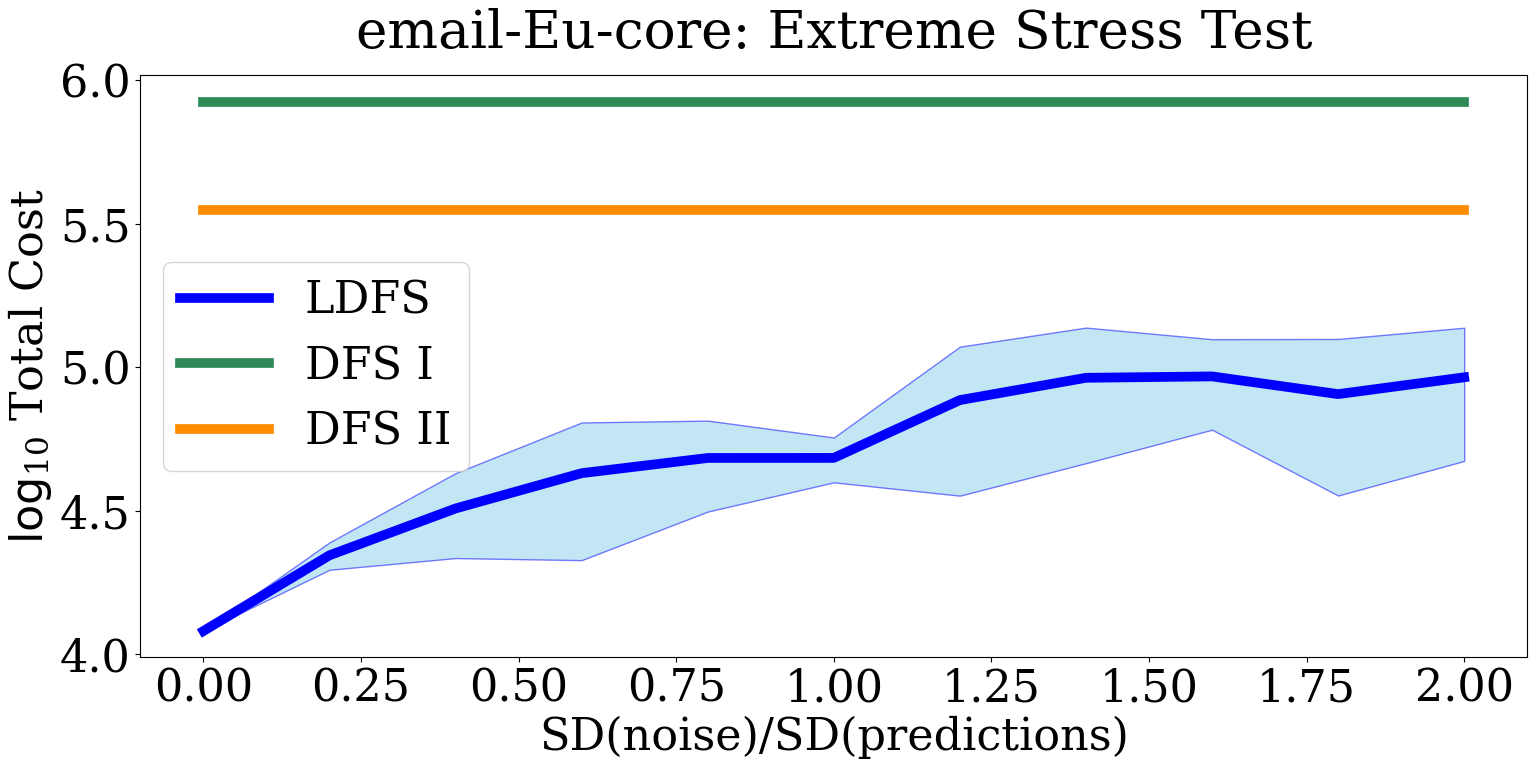}
        \caption{}
        \label{fig:email-Eu-core robustness}
    \end{subfigure}
    \caption{Total cost (number of nodes and edges processed) of LDFS compared to the two baselines for email-Eu-core dataset, in logarithmic scale.
    In Figure~\ref{fig:email-Eu-core scale training set}, the x-axis is the percentage of the input sequence used as training data for LDFS. 
    Figure~\ref{fig:email-Eu-core robustness} shows the effect of adding noise to predictions on the cost of LDFS. The first 5\% of the input is used as the training data and the last 95\% as the test data. For different values of $C$, a normal noise with mean 0 and standard deviation (\text{SD}) of $C\cdot$SD(predictions) is independently added to each prediction. This noise is regenerated 10 times.  The x-axis is \text{SD}(noise)/\textsc{SD}(predictions). The blue line is the mean and the cloud around it is the SD of these experiments.}
    \label{fig:email-Eu-core}
\end{figure}

\paragraph{Discussion.} Results in Table~\ref{table:exp} demonstrate that, in all cases, even a very basic prediction algorithm can significantly enhance performance over the baselines. 
Only 5\% of historical data is needed to see a significant difference between our methods's performance and the baselines; in some cases up to a factor of 36 in cost. Better predictions obtained from 50\% of historical data improve performance further, up to a factor of 116.

Finally, Figure~\ref{fig:email-Eu-core robustness} shows that LDFS is very robust to bad predictions. 
For example, note that if $\text{SD(noise)} \geq 2\cdot\text{SD(predictions)}$, then $\approx 61\%$ of the noisy predictions have noise added to them that is at least SD(predictions)\footnote{In a normal distribution, $\approx 61\%$ of items are more than half a standard deviation from the mean.}---thus, the relative value of the predictions becomes largely random for many items.  Since the LDFS algorithm's performance only
depends on how predictions for different nodes relate to each other (not their value), this represents a significant amount of noise, effectively nullifying the predictions of many nodes.
Nonetheless, LDFS still outperforms the baselines even for this extreme stress test.  Moreover, increasing the noise degrades the performance confirming that the efficiency does depend on the quality of predictions. 

In Appendix~\ref{sec:additional_experiments}, we include additional plots for the datasets in Table~\ref{table:dataset features}.  We also investigate the effect of edge density on
performance for synthetic DAGs.  These experiments further support our conclusions; in particular, even for very dense DAGs, our algorithm still outperforms the baselines, although with smaller margins.

\section{Conclusion}\label{sec:conclusion}

This paper gave the first dynamic graph data structure that leverages predictions to maintain an incremental topological ordering.   We show that the data structure is ideal: that is, it is consistent, robust, and smooth with respect to errors in prediction. Thus,  predictions deliver speedups on typical instances while never performing worse than the state-of-the-art worst case solutions.  This paper is also the first empirical evaluation of using predictions on dynamic graph data structures. 
 Our experiments show that the theory is predictive of practice: predictions deliver up to $6$x speedup for LDFS compared to natural baselines.

Our results demonstrate the incredible potential for improving the theoretical and empirical efficiency of data structures using predictions. It would be interesting to explore how predictions can be leveraged for designing data structures for other dynamic graph problems.


\balance
\bibliography{topsort}

\begin{thebibliography}{27}
\providecommand{\natexlab}[1]{#1}
\providecommand{\url}[1]{\texttt{#1}}
\expandafter\ifx\csname urlstyle\endcsname\relax
  \providecommand{\doi}[1]{doi: #1}\else
  \providecommand{\doi}{doi: \begingroup \urlstyle{rm}\Url}\fi

\bibitem[Ajwani et~al.(2008)Ajwani, Friedrich, and Meyer]{ajwani2008n}
Ajwani, D., Friedrich, T., and Meyer, U.
\newblock An ${O}(n^{2.75})$ algorithm for incremental topological ordering.
\newblock \emph{ACM Transactions on Algorithms (TALG)}, 4\penalty0
  (4):\penalty0 1--14, 2008.

\bibitem[Bai \& Coester(2023)Bai and Coester]{BaiC23}
Bai, X. and Coester, C.
\newblock Sorting with predictions.
\newblock In \emph{Advances in Neural Information Processing Systems}, 2023.
\newblock \textit{To appear.}

\bibitem[Baswana et~al.(2018)Baswana, Goel, and Khan]{baswana2018incremental}
Baswana, S., Goel, A., and Khan, S.
\newblock Incremental dfs algorithms: a theoretical and experimental study.
\newblock In \emph{Proceedings of the Twenty-Ninth Annual ACM-SIAM Symposium on
  Discrete Algorithms}, pp.\  53--72. SIAM, 2018.

\bibitem[Bender et~al.(2009)Bender, Fineman, and Gilbert]{BenderFiGi09}
Bender, M.~A., Fineman, J.~T., and Gilbert, S.
\newblock A new approach to incremental topological ordering.
\newblock In \emph{Proc. 20th ACM-SIAM Symposium on Discrete algorithms
  (SODA)}, pp.\  1108--1115. SIAM, 2009.

\bibitem[Bender et~al.(2015)Bender, Fineman, Gilbert, and Tarjan]{BenderFiGi15}
Bender, M.~A., Fineman, J.~T., Gilbert, S., and Tarjan, R.~E.
\newblock A new approach to incremental cycle detection and related problems.
\newblock \emph{ACM Transactions on Algorithms (TALG)}, 12\penalty0
  (2):\penalty0 1--22, 2015.

\bibitem[Bernstein \& Chechi(2018)Bernstein and Chechi]{BernsteinChechi18}
Bernstein, A. and Chechi, S.
\newblock Incremental topological sort and cycle detection in
  $\tilde{O}(m\sqrt{n})$ expected total time.
\newblock In \emph{Proc. 29th ACM-SIAM Symposium on Discrete Algorithms
  (SODA)}, pp.\  21--34. SIAM, 2018.

\bibitem[Bhattacharya \& Kulkarni(2020)Bhattacharya and
  Kulkarni]{BhattacharyaKulkarni20}
Bhattacharya, S. and Kulkarni, J.
\newblock An improved algorithm for incremental cycle detection and topological
  ordering in sparse graphs.
\newblock In \emph{Proc. 14th Annual ACM-SIAM Symposium on Discrete Algorithms
  (SODA)}, pp.\  2509--2521. SIAM, 2020.

\bibitem[Brand et~al.(2023)Brand, Forster, Nazari, and Polak]{brand2023dynamic}
Brand, J. v.~d., Forster, S., Nazari, Y., and Polak, A.
\newblock On dynamic graph algorithms with predictions.
\newblock \emph{arXiv preprint arXiv:2307.09961}, 2023.

\bibitem[Chen et~al.(2023)Chen, Kyng, Liu, Meierhans, and
  Gutenberg]{ChenKyLi23}
Chen, L., Kyng, R., Liu, Y.~P., Meierhans, S., and Gutenberg, M.~P.
\newblock Almost-linear time algorithms for incremental graphs: Cycle
  detection, sccs, $s$-$t$ shortest path, and minimum-cost flow, 2023.

\bibitem[Cohen et~al.(2013)Cohen, Fiat, Kaplan, and Roditty]{CohenFKR13}
Cohen, E., Fiat, A., Kaplan, H., and Roditty, L.
\newblock A labeling approach to incremental cycle detection.
\newblock \emph{CoRR}, abs/1310.8381, 2013.
\newblock URL \url{http://arxiv.org/abs/1310.8381}.

\bibitem[Davies et~al.(2023)Davies, Moseley, Vassilvitskii, and
  Wang]{DaviesMVW}
Davies, S., Moseley, B., Vassilvitskii, S., and Wang, Y.
\newblock Predictive flows for faster ford-fulkerson.
\newblock In Krause, A., Brunskill, E., Cho, K., Engelhardt, B., Sabato, S.,
  and Scarlett, J. (eds.), \emph{Proc. of the 40th International Conference on
  Machine Learning (ICML)}, volume 202 of \emph{Proceedings of Machine Learning
  Research}, pp.\  7231--7248. PMLR, 23--29 Jul 2023.
\newblock URL \url{https://proceedings.mlr.press/v202/davies23b.html}.

\bibitem[Dinitz et~al.(2021)Dinitz, Im, Lavastida, Moseley, and
  Vassilvitskii]{DinitzILMV21}
Dinitz, M., Im, S., Lavastida, T., Moseley, B., and Vassilvitskii, S.
\newblock Faster matchings via learned duals.
\newblock In Ranzato, M., Beygelzimer, A., Dauphin, Y.~N., Liang, P., and
  Vaughan, J.~W. (eds.), \emph{Proc. 34th Conference on Advances in Neural
  Information Processing Systems (NeurIPS)}, pp.\  10393--10406, 2021.
\newblock URL
  \url{https://proceedings.neurips.cc/paper/2021/hash/5616060fb8ae85d93f334e7267307664-Abstract.html}.

\bibitem[Ferragina et~al.(2021)Ferragina, Lillo, and
  Vinciguerra]{ferragina2021performance}
Ferragina, P., Lillo, F., and Vinciguerra, G.
\newblock On the performance of learned data structures.
\newblock \emph{Theoretical Computer Science (TCS)}, 871:\penalty0 107--120,
  2021.

\bibitem[Franciosa et~al.(1997)Franciosa, Gambosi, and
  Nanni]{franciosa1997incremental}
Franciosa, P.~G., Gambosi, G., and Nanni, U.
\newblock The incremental maintenance of a depth-first-search tree in directed
  acyclic graphs.
\newblock \emph{Information processing letters}, 61\penalty0 (2):\penalty0
  113--120, 1997.

\bibitem[Haeupler et~al.(2012)Haeupler, Kavitha, Mathew, Sen, and
  Tarjan]{haeupler2012incremental}
Haeupler, B., Kavitha, T., Mathew, R., Sen, S., and Tarjan, R.~E.
\newblock Incremental cycle detection, topological ordering, and strong
  component maintenance.
\newblock \emph{ACM Transactions on Algorithms (TALG)}, 8\penalty0
  (1):\penalty0 1--33, 2012.

\bibitem[Henzinger et~al.(2018)Henzinger, Krinninger, and
  Nanongkai]{henzinger2018decremental}
Henzinger, M., Krinninger, S., and Nanongkai, D.
\newblock Decremental single-source shortest paths on undirected graphs in
  near-linear total update time.
\newblock \emph{Journal of the ACM (JACM)}, 65\penalty0 (6):\penalty0 1--40,
  2018.

\bibitem[Henzinger et~al.(2024)Henzinger, Lincoln, Saha, Seybold, and
  Ye]{HenzingerLSSY24}
Henzinger, M., Lincoln, A., Saha, B., Seybold, M.~P., and Ye, C.
\newblock On the complexity of algorithms with predictions for dynamic graph
  problems.
\newblock In \emph{Innovations in Theoretical Computer Science (ITCS)}, 2024.

\bibitem[Italiano(1986)]{Italiano86}
Italiano, G.~F.
\newblock Amortized efficiency of a path retrieval data structure.
\newblock \emph{Theoretical Computer Science}, 48:\penalty0 273--281, 1986.

\bibitem[Kraska et~al.(2018)Kraska, Beutel, Chi, Dean, and
  Polyzotis]{KraskaBCDP18}
Kraska, T., Beutel, A., Chi, E.~H., Dean, J., and Polyzotis, N.
\newblock The case for learned index structures.
\newblock In Das, G., Jermaine, C.~M., and Bernstein, P.~A. (eds.),
  \emph{Proc.\ 44th Annual International Conference on Management of Data,
  (SIGMOD)}, pp.\  489--504. {ACM}, 2018.
\newblock \doi{10.1145/3183713.3196909}.
\newblock URL \url{https://doi.org/10.1145/3183713.3196909}.

\bibitem[Lattanzi et~al.(2023)Lattanzi, Svensson, and
  Vassilvitskii]{LattanziSV23}
Lattanzi, S., Svensson, O., and Vassilvitskii, S.
\newblock Speeding up bellman ford via minimum violation permutations.
\newblock In Krause, A., Brunskill, E., Cho, K., Engelhardt, B., Sabato, S.,
  and Scarlett, J. (eds.), \emph{International Conference on Machine Learning,
  {ICML} 2023, 23-29 July 2023, Honolulu, Hawaii, {USA}}, volume 202 of
  \emph{Proceedings of Machine Learning Research}, pp.\  18584--18598. {PMLR},
  2023.
\newblock URL \url{https://proceedings.mlr.press/v202/lattanzi23a.html}.

\bibitem[Leskovec \& Krevl(2014)Leskovec and Krevl]{snapnets}
Leskovec, J. and Krevl, A.
\newblock {SNAP Datasets}: {Stanford} large network dataset collection.
\newblock \texttt{http://snap.stanford.edu/data}, June 2014.

\bibitem[Marchetti-Spaccamela et~al.(1993)Marchetti-Spaccamela, Nanni, and
  Rohnert]{marchetti1993line}
Marchetti-Spaccamela, A., Nanni, U., and Rohnert, H.
\newblock On-line graph algorithms for incremental compilation.
\newblock In \emph{International Workshop on Graph-Theoretic Concepts in
  Computer Science}, pp.\  70--86. Springer, 1993.

\bibitem[McCauley et~al.(2023)McCauley, Moseley, Niaparast, and
  Singh]{McCauleyMNS23}
McCauley, S., Moseley, B., Niaparast, A., and Singh, S.
\newblock Online list labeling with predictions.
\newblock In \emph{Advances in Neural Information Processing Systems}, 2023.

\bibitem[Panzarasa et~al.(2009)Panzarasa, Opsahl, and
  Carley]{panzarasa2009patterns}
Panzarasa, P., Opsahl, T., and Carley, K.~M.
\newblock Patterns and dynamics of users' behavior and interaction: Network
  analysis of an online community.
\newblock \emph{Journal of the American Society for Information Science and
  Technology}, 60\penalty0 (5):\penalty0 911--932, 2009.

\bibitem[Paranjape et~al.(2017)Paranjape, Benson, and
  Leskovec]{paranjape2017motifs}
Paranjape, A., Benson, A.~R., and Leskovec, J.
\newblock Motifs in temporal networks.
\newblock In \emph{Proceedings of the tenth ACM international conference on web
  search and data mining}, pp.\  601--610, 2017.

\bibitem[Sakaue \& Oki(2022)Sakaue and Oki]{SakaueO22}
Sakaue, S. and Oki, T.
\newblock Discrete-convex-analysis-based framework for warm-starting algorithms
  with predictions.
\newblock In \emph{35th Conference on Neural Information Processing Systems
  (NeurIPS)}, 2022.

\bibitem[van~den Brand et~al.(2024)van~den Brand, Forster, Nazari, and
  Polak]{BrandFNPS24}
van~den Brand, J., Forster, S., Nazari, Y., and Polak, A.
\newblock On dynamic graph algorithms with predictions.
\newblock In \emph{Symposium on Discrete Algorithms (SODA)}, 2024.

\end{thebibliography}
\bibliographystyle{icml2024}

\newpage
\appendix

\section{Omitted Proofs}
\label{sec:omitted_proofs}

\begin{proofof}{Lemma~\ref{lem:simple_weak_top_sort}}
By Invariant~\ref{inv:tabels_are_top_sort}, for any cycle $C$ in the graph, all vertices in $C$ must at the same level.  Each time we add an edge $e = (u, v)$, if $\ell(u) = \ell(v)$, the algorithm checks whether the addition of this edge creates a cycle within that level through a reverse depth-first search. 

    Now, assume there is no cycle in $G_t$; we show that a weak topological sort is maintained.  A weak topological sort is trivially maintained in $G_0$, so assume inductively that the algorithm correctly maintains a weak topological sort in $G_{t-1}$.
    Consider an edge $(u,v)\in G_t$.  If $\ell(u) < \ell(v)$, then the label of $u$ is less than the label of $v$ since $i(u),i(v) \leq nm+1$.  If $\ell(u) = \ell(v)$, then we split into cases based on if the label of $u$ or $v$ was changed during the updates after the $t$th edge was inserted.  If neither $u$ or $v$ were updated, the labels continue to be a topological ordering as in $G_{t-1}$. It is not possible that $v$ is updated but $u$ is not: for any $v$ visited during DFS, since $\ell(u) = \ell(v)$, $u$ is also visited; for any $v$ whose label is updated, $u$ must have a strictly larger label than any other parent of $v$. If $u$ is updated and $v$ is not, then 
        $i(u)$ is set to $a$; since $a$ decreases each time some $i(w)$ is set, we must have $i(u) < i(v)$. If both $u$ and $v$ are updated, 
        $u$ must come before $v$ in $T$.
        Again, since $a$ decreases each time some $i(w)$ is set, we must have $i(u) < i(v)$.
\end{proofof}

\begin{proofof}{Lemma~\ref{lem:simple_ancestors_error}}  
Let $A$ denote the set of all ancestors of $v$ at level $\ell(v)$ at the current time.  
Consider the vertices in $A$ after all edges are inserted (in $G_m$): since $G_m$ is acyclic, there must be at least one vertex $z\in A$ such that no vertex $w \in A$ has that $w$ is an ancestor of $z$ in $G_m$.  

Since $z$ is an ancestor of $v$, all ancestor edges of $z$ are ancestor edges of $v$. However by definition of $z$, an ancestor edge of any $w\in A$ is never an ancestor edge of $z$.  All $k$ ancestor edges of $v$ on its level are ancestor edges of some $w\in A$. Therefore, $\alpha(v) \geq \alpha(z) + k$, so 
$\alpha(v) - \alpha(z) \geq k$.

As levels only increase $\ell(v) \geq \anp(v)$.  By Invariant~\ref{inv:label_is_max}, $\ell(v) \leq \anp(z)$;  
equivalently, $-\anp(v) + \anp(z) \geq 0$.  Summing the above two inequalities, we get 
\[
(\alpha(v) - \anp(v)) +  (\anp(z) - \alpha(z)) \geq k.
\]
Thus, either $\eta_v \geq k/2$ or $\eta_z \geq k/2$.
\end{proofof}

\begin{proofof}{Lemma~\ref{lem:sparse_ancestors}}
Let $H$ refer to the subgraph $H_{j,k}$ after the last edge is inserted into it (thus, $H$ includes edges that are inserted in the future, whereas $H_{j,k}$ does not).    We use $\alpha^H(v)$ and $\delta^H(v)$ to denote the number of 
number of ancestor and descendant edges of a vertex $v$ in $H$.
  
Let $u$ be an ancestor of $v$ in $H$, such that no ancestor edge of $v$ in $H$  is an ancestor edge of $u$ in $H$. Such a $u$ always exists as $H$ is acyclic and can be found by recursively following in-edges of $v$.

By definition of $u$, all ancestor edges of $u$ are ancestor edges of $v$ (in $G_m$); however, no ancestor edges of $v$ in $H$ are ancestor edges of $w$ (in $G_m$).  Thus, $\alpha(v) \geq \alpha(u) + \alpha^H(v)$, so $\alpha(v) - \alpha(u) \geq a^H(v)$.

As both $v$ and $u$ are in $H_{j,k}$ we can bound the difference of their predictions using $\heta$.  That is, 
$j \geq \lceil \anp(v)/\heta\rceil$, and therefore 
$j\geq \anp(v)/\heta$.
Similarly, 
$j\leq \lceil \anp(u)/\heta\rceil + 1$, so 
$j \leq \anp(u)/\heta + 2$.  
Combining, $\anp(u)/\heta + 2 \geq \anp(v)/\heta$, so $\anp(u) - \anp(v) \geq -2\heta$.

Summing the two above equations, we obtain that 
\[
\left(\alpha(v) - \anp(v) \right) + 
\left(\anp(u) - \alpha(u) \right) \geq a^H(v) - 2\heta
\]

By the definition, $\alpha(v) - \anp(v) \leq \etamax$ and $\anp(u) - \alpha(u) \leq \etamax$.  Substituting, $a^H(v) \leq 2\heta + 2\etamax$.

The analysis for the number of descendants is analogous.
 Let $w$ be a descendant of $v$ in $H$, such that no descendant edge of $v$ in $H$  is a descendant edge of $w$ in $H$ . 
By definition of $w$, all descendant edges of $w$ are descendant edges of $v$ (in $G_m$); however, no descendant edges of $v$ in $H$ are descendant edges of $w$ (in $G_m$).  Therefore, $\delta(v) \geq \delta(w) + \delta^H(v)$, so $\delta_m(v) - \delta(w) \geq \delta^H(v)$.

As both $v$ and $w$ are in $H_{j,k}$, we have that 
$j \leq \lfloor \dep(w)/\heta\rfloor$, and therefore 
$j\leq \dep(w)/\heta$.
Similarly, 
$j\geq \lfloor \dep(v)/\heta\rfloor - 1$, so 
$j \geq \dep(v)/\heta - 2$.  
Combining, $\dep(w)/\heta \geq \dep(v)/\heta - 2$, so 
$\dep(w) - \dep(v) \geq -2\heta$.

Summing the two above equations, we obtain that 
\[
\left(\delta(v) - \dep(v) \right) + 
\left(\dep(w) - \delta(w) \right) \geq \delta^H(v) - 2\heta.
\]

By the definition, $\delta(v) - \dep(v) \leq \etamax$ and $\dep(w) - \delta(w) \leq \etamax$.  Substituting, $\delta^H(v) \leq 2\heta + 2\etamax$.
As the number of ancestor and descendant edges are nondecreasing,  this upper bound (in $H$ after all edges are inserted), is also an upper bound at all times in $H_{j,k}$.
\end{proofof}

\begin{proofof}{Lemma~\ref{lem:sparse_heta}}
 We proceed by induction. The lemma is trivially satisfied at time $0$ (since $\heta_0 = 1$), as well as any time where $\heta$ does not change.

   Consider a time when $\heta$ is increased, from $\heta$ to $2\heta$; we show that $\heta \leq 2\etamax$.  When $\heta$ is increased, there is some vertex $v$ with $\ell(v) \notin \calH(v)$.
  We split into two cases based on if the ancestor level or the descendant level constraint is violated: $\ell^a(v) > \lceil \anp(v)/\heta\rceil + 1$, and $\ell^d(v) < \lfloor \dep(v)/\heta\rfloor - 1$.   We begin with the first case.
   Without loss of generality, consider a vertex $v$ that violates the constraint such that no ancestor of $v$ violates the constraint.  Specifically, 
  $\ell^a(v) > \lceil \anp(v)/\heta\rceil + 1$, whereas $\ell(u) \leq \lceil \anp(v)/\heta\rceil + 1$ for all ancestors $u$ of $v$.

   When inserting an edge $e = (x,y)$, the algorithm updates the ancestor levels of all descendants of $x$ to have the same ancestor levels as $x$; no other ancestor levels are updated.  
   Thus, $v$ has an ancestor $w$ with $\anp(w) = \ell^a(v)$.

   Noting that the label of $v$ can only increase, we must have that $\anp(w) = \ell(v) > \lceil \anp(v)/\heta\rceil + 1$.  
   Thus, $\anp(w) - \anp(v) >  \heta$.

   Since $w$ is an ancestor of $v$, $\alpha(w) < \alpha(v)$, so $\alpha(v) - \alpha(w) \geq 0$.  Summing the above two equations,
   \[
   (\anp(w) - \alpha(w)) + (\alpha(v) - \anp(v)) > \heta
   \]
   Thus either $\eta_w > \heta/2$ or $\eta_v > \heta/2$, so $\heta < 2\etamax$.

   The analysis for the descendant constraint is identical.
\end{proofof}

\begin{proofof}{Lemma~\ref{lem:bender_ancestors_running_time}}
   For each vertex, BFGT maintains a vertex level (that determines the internal label for our algorithm), and a vertex count.  
   In the proof of~{\citep[Theorem 3.6]{BenderFiGi15}}, each edge traversal in BFGT increases the vertex level or a vertex count, and the running time of BFGT is upper bounded by the number of edge traversals plus $m'$ (i.e.\ $O(1)$ additional time for each inserted edge, even if no edge is traversed).  Thus, our goal is to bound the number of times a vertex level or vertex count increases in $H$.

   A vertex level begins at 0 and is nondecreasing for all vertices by definition.  By~\citep[Theorem 3.5]{BenderFiGi15}, the level of each vertex is upper bounded by the number of (vertex) ancestors, which is in turn upper bounded by the number of edge ancestors.  
   Since each vertex has $O(\eta)$ vertex ancestors by Lemma~\ref{lem:bender_ancestors_running_time}, the total number of vertex level increases is $\tilde{O}(\etamax)$, giving $\tilde{O}(n'\etamax)$ increases overall.

   Next, we summarize how a vertex count changes over time, and use this to show that it increases by the maximum vertex level.  Let $\ell = \tilde{O}(\etamax)$ be the maximum vertex level of any vertex.
   See the proof of~\citep[Theorem 3.6]{BenderFiGi15} for more details.  
   The data structure maintains a parameter $j$ for each vertex $v$, where $j$ is at most $\log_2 (\text{current vertex count of $v$})$.
   The count for a vertex $v$ begins at $0$, and increases up to $3\cdot 2^j$, after which it is reset to $0$.  This count must increase by at least $2^j$ over the same time.  Thus, so far, the number of times a vertex count is incremented is at most $3\ell$.
    The count may be reset to $0$ one additional time (at most $3\ell$ more increases); furthermore, the count may at the end of the algorithm increase up to $3\cdot 2^j$ without being reset (another $3\ell$ more increases).  Thus, a vertex count can be incremented at most $9\ell$ times.
\end{proofof}

\begin{proofof}{Lemma~\ref{lem:sparse_ancestors_running_time}}
 The cost of BK as shown in~\cite{BhattacharyaKulkarni20} is
\[
\tilde{O}\left(m'n'/\tau + n'^2/\tau + \sqrt{m'^3\tau/n'} + \sqrt{m'n'\tau}\right).
\]

 First, we show that if all vertices in $H$ have at most $O(\etamax)$ edge ancestors and $O(\etamax)$ edge descendants, then the running time of BK on $H$ is 
\begin{equation}
\label{eq:desired_sparse_black_box}
\tilde{O}\left(\frac{m'\etamax}{\tau} + \frac{n'\etamax}{\tau} + \sqrt{\frac{m'^3\tau}{n'}} + \sqrt{m'n'\tau}\right).
\end{equation}

Let us begin with the first term of Equation~\ref{eq:desired_sparse_black_box}.  This term comes from~\citep[Lemma 2.2]{BhattacharyaKulkarni20}.  Specifically, there are $n/\tau$ sampled vertices in expectation; we maintain all ancestors and descendants of each sampled vertex.  This can be done efficiently using the classic data structure presented in~\citep{Italiano86}.  

The result as stated in~\cite{Italiano86} states that the descendants of \emph{all} (rather than just sampled) vertices can be maintained in $O(nm)$ time. Our results require a slightly stronger analysis.\footnote{In fact, \cite{BhattacharyaKulkarni20} also need a stronger analysis, simpler to that presented here, since they only maintain the descendants of sampled vertices.}  For completeness, we summarize this tighter analysis here.  The bounds in~\cite{Italiano86} are based on a potential function analysis, where each vertex $v$ has (using the notation of~\cite{Italiano86}) potential $-(|\text{vis}(x) + 3|\text{desc}(x))$, where vis$(x)$ is the number of descendant edges of $x$, and desc$(x)$ is the number of descendant vertices of $x$.  
They show that their amortized cost (the cost plus the change in potential) of an edge insert is $O(1)$, and that the potential of all nodes is nonincreasing.
We observe that if we only want to maintain the descendants of sampled vertices, we can set the potential of non-sampled nodes to $0$; their amortized analysis argument still holds under this change.  By Lemma~\ref{lem:sparse_ancestors} and Lemma~\ref{lem:sparse_heta}, the potential of any node is at least $-6\etamax$, so the total cost to maintain the descendants of each sampled vertex is $O(\etamax)$.  
An essentially-identical analysis shows that the total cost to maintain all ancestors of sampled nodes is $O(\etamax)$.
Since there are $n/\tau$ expected sampled nodes, we obtain a total expected cost of $O(n\etamax/\tau)$.

Now, the second term of Equation~\ref{eq:desired_sparse_black_box}.  In~\citep[Lemma 2.3]{BhattacharyaKulkarni20}, it is shown that the total time in ``phase II'' is $\tilde{O}(n^2/\tau)$.  In short, they show that the cost for a vertex $v$ is $\tilde{O}(A_S(v) + D_S(v))$, where $A_S(v)$ and $D_S(v)$ are the number of sampled ancestor and descendant vertices of $v$ respectively.  Since a vertex is sampled with probability $\Theta(\log n/\tau)$, they obtain expected cost $\tilde{O}(n/\tau)$ per vertex.  A vertex in $H$ has only $O(\etamax)$ ancestor or descendant edges, and therefore only $O(\etamax)$ ancestor or descendant vertices, and therefore expected cost $\tilde{O}(\log n' \etamax/\tau)$.  Summing over all $n'$ vertices of $H$ we obtain the desired second term.

The third and fourth term of Equation~\ref{eq:desired_sparse_black_box} remain unchanged; thus the running time of BK on $H$ is given by Equation~\ref{eq:desired_sparse_black_box}.

Substituting $\tau = n'^{1/3}\heta^{2/3}/m'^{1/3}$, we obtain running time $m\etamax^{1/3}$.  
Note that BK samples vertices with probability $\Theta(\log n/\tau)$, so we need that $\tau = \Omega(\log n)$.  This is satisfied for large $n'$ due to 
    $m' < \heta^{2}n'\log^2 n'$.  We note that if BK was to sample vertices with a fixed probability $C_1\log n'/\tau$, we could replace the final $\log n'$ term in our bound on $m'$ with $C_1$.
\end{proofof}
\begin{proofof}{Theorem~\ref{thm:dense_running_time}}
We bound the cost of updating the levels first; then we bound the total cost of all subgraphs.

 First, we consider the cost of updating levels after the $i$th edge is inserted.  We only traverse an edge $(u, v)$ while updating levels if the level of $u$ is updated.
    
    First, consider an update when $\heta$ does not increase; thus each vertex has one of its possible levels after the update.
    Each vertex has four possible levels, so each vertex can have its levels updated once per value of $\heta$; thus, each edge can only be traversed once per value of $\heta$.  This leads to $\tilde{O}(m\log \heta_m)$ time.
    
    Now, the other case: if $\heta$ increases, the cost of the scan is at most $O(m)$; since $\heta$ increases $\log_2 \heta$ times, this gives an additional $\tilde{O}(m\log \heta_m)$ time.  

Now, the cost of inserting all edges into their corresponding subgraphs.  Let us begin with some observations about the cost of a single subgraph $H_{i,j}$ with $n'$ vertices and (after all insertions are complete) $m'$ edges, for a fixed $\heta$.  If $m' < \heta^{2/3} n'/\log^2 n$, then by Lemma~\ref{lem:sparse_ancestors_running_time} (note that $m' < \heta^{2/3} n'/\log^2 n$ implies $m' < \heta^2 n'/\log^2 n'$), all edge insertions into $H_{i,j}$ cost $m'\heta^{1/3}$.  If $m' \geq \heta^{2/3} n'/\log^2 n'$, then the first $\heta^{2/3}n'/\log^2 n'$ insertions into $H_{i,j}$ have cost $\tilde{O}(\heta n')$ by Lemma~\ref{lem:sparse_ancestors_running_time}.  All remaining insertions (including reinserting the first $\heta^{2/3}n'/\log^2 n'$ edges during \textsc{Rebuild}) have cost $O(n'\heta)$ by Lemma~\ref{lem:bender_ancestors_running_time}, for $O(n'\heta)$ total time.  Overall, all edge insertions into $H_{i,j}$ take $O(\min\{n'\heta, m'\heta^{1/3}\})$ time.

Now, we sum over all subgraphs and over all values of $\heta$ to achieve the final running time.  
Let $\ell_{\eta} = \log_2 \heta$; thus when $\heta$ doubles $\ell_{\eta}$ is incremented.  Let $n_{i,j,\heta}$ and $m_{i,j,\heta}$ be respectively the number of vertices and total number of edges in $H_{i,j}$ under a given $\heta$.  Then we can bound the total time spent in all subgraphs as:
\begin{multline*}
\sum_{\ell_\eta = 0}^{\lceil \log_2 \eta\rceil + 1} 
\sum_{i=0}^{m/\heta+1}
\sum_{j=0}^{m/\heta+1}
\tilde{O}(\min\{\heta n_{i,j,\heta}, \heta^{1/3} m_{i,j,\heta}\}) \leq\\
\min\left\{
\sum_{\ell_\eta = 0}^{\lceil \log_2 \eta\rceil + 1} 
\sum_{i=0}^{m/\heta+1}
\sum_{j=0}^{m/\heta+1}
\tilde{O}(\heta n_{i,j,\heta}),\right.\\
\left.
\sum_{\ell_\eta = 0}^{\lceil \log_2 \eta\rceil + 1} 
\sum_{i=0}^{m/\heta+1}
\sum_{j=0}^{m/\heta+1}
\tilde{O}(\heta^{1/3} m_{i,j,\heta}).
\right\}
\end{multline*}
Since each vertex is in at most $4$ subgraphs,
\[
\sum_{i=0}^{m/\heta+1}
\sum_{j=0}^{m/\heta+1} n_{i,j,\eta} \leq 4n.
\]
and
\[
\sum_{i=0}^{m/\heta+1}
\sum_{j=0}^{m/\heta+1} m_{i,j,\eta} \leq 4m.
\]
Substituting, the total running time on all subgraphs is 
$\tilde{O}(\min\{n\eta, m \eta^{1/3}\})$.

If at any time $\heta > n$ and $m\heta^{1/3} > n^2$ we stop the above process and use BFGT.  The cost of all edge inserts while $\heta \leq n$ is $\tilde{O}(n^2)$ by the above; the cost of all remaining inserts is $\tilde{O}(n^2)$~\cite{BenderFiGi15}.  Thus, the overall total running time is
$ \tilde{O}(\min\{n\eta, m \eta^{1/3},n^2\})$.
\end{proofof}

\section{Additional Experiments}
\label{sec:additional_experiments}
In this section, we present additional experiments; in particular, we explore how the performance is influenced by the edge density of the graph in synthetic DAGs. 
We also further describe the experimental setup and the datasets we use.

\paragraph{Dataset Description.} 
Here we describe the real temporal datasets we use in our experiments.
\begin{itemize}
    \item email-Eu-core\footnote{https://snap.stanford.edu/data/email-Eu-core-temporal.html}~\cite{paranjape2017motifs}: This network contains the records of the email communications between the members of a large European research institution.  A directed edge $(u, v, t)$ means that person $u$ sent an e-mail to person $v$ at time $t$.
    \item CollegeMsg\footnote{https://snap.stanford.edu/data/CollegeMsg.html}~\cite{panzarasa2009patterns}: This dataset includes records about the private messages sent on an online social network at the University of California, Irvine.  A timestamped arc $(u, v, t)$ means that user $u$ sent a private message to user $v$ at time $t$.
    \item Math Overflow\footnote{https://snap.stanford.edu/data/sx-mathoverflow.html}~\cite{paranjape2017motifs}: This is a temporal network of interactions on the stack exchange web site Math Overflow\footnote{https://mathoverflow.net/}.  We use the answers-to-questions network, which includes arcs of the form $(u,v,t)$, meaning that user $u$ answered user $v$'s question at time $t$.
\end{itemize}

\begin{figure}[h]
    \centering
    \begin{subfigure}{\linewidth}
        \centering \includegraphics[width=\linewidth, height=4cm]{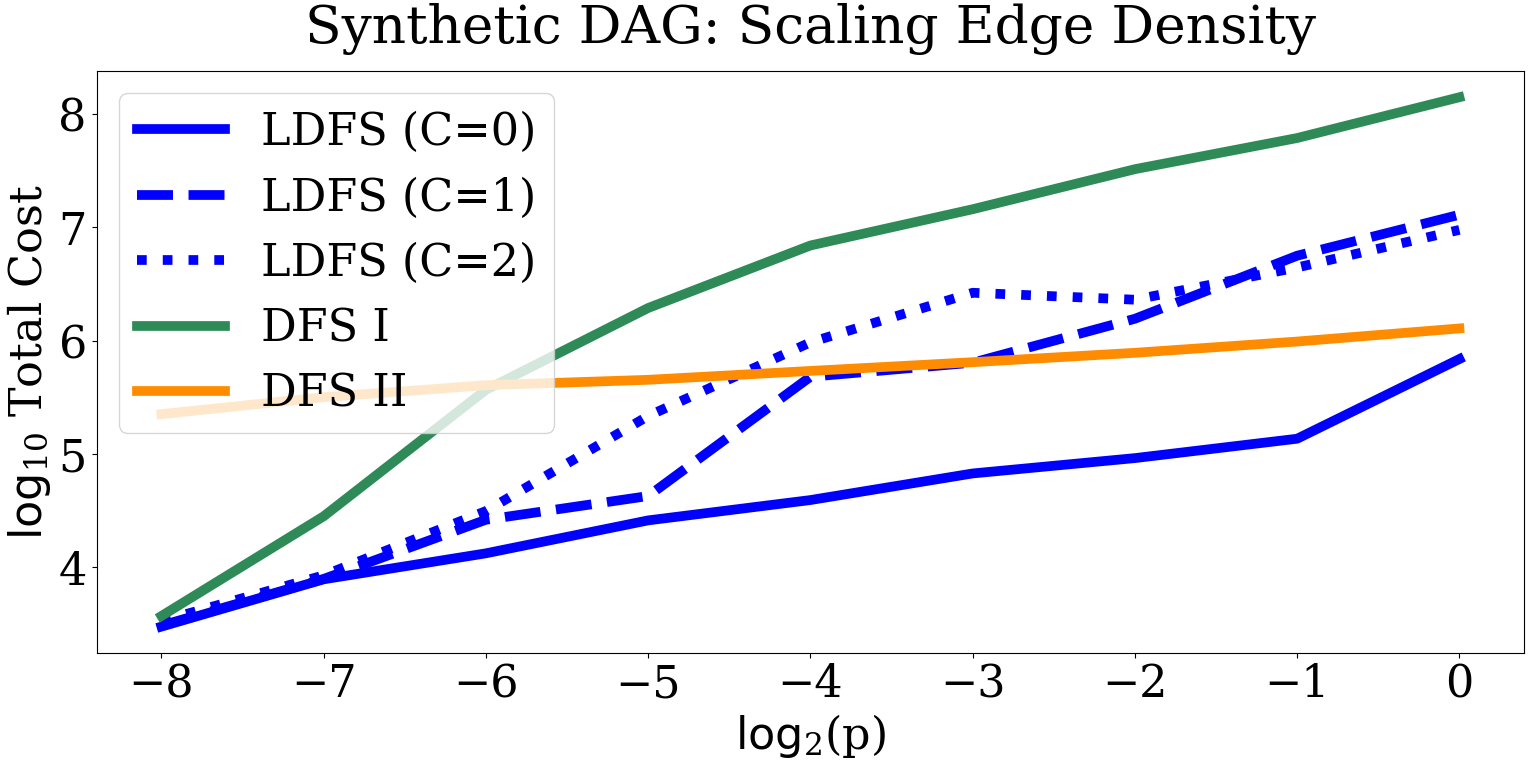}
        \vspace*{-0.2in}
        \caption{}
    \label{fig:DAGcost}
    \end{subfigure}
    \par\smallskip
    \begin{subfigure}{\linewidth}
        \centering \includegraphics[width=\linewidth, height=4cm]{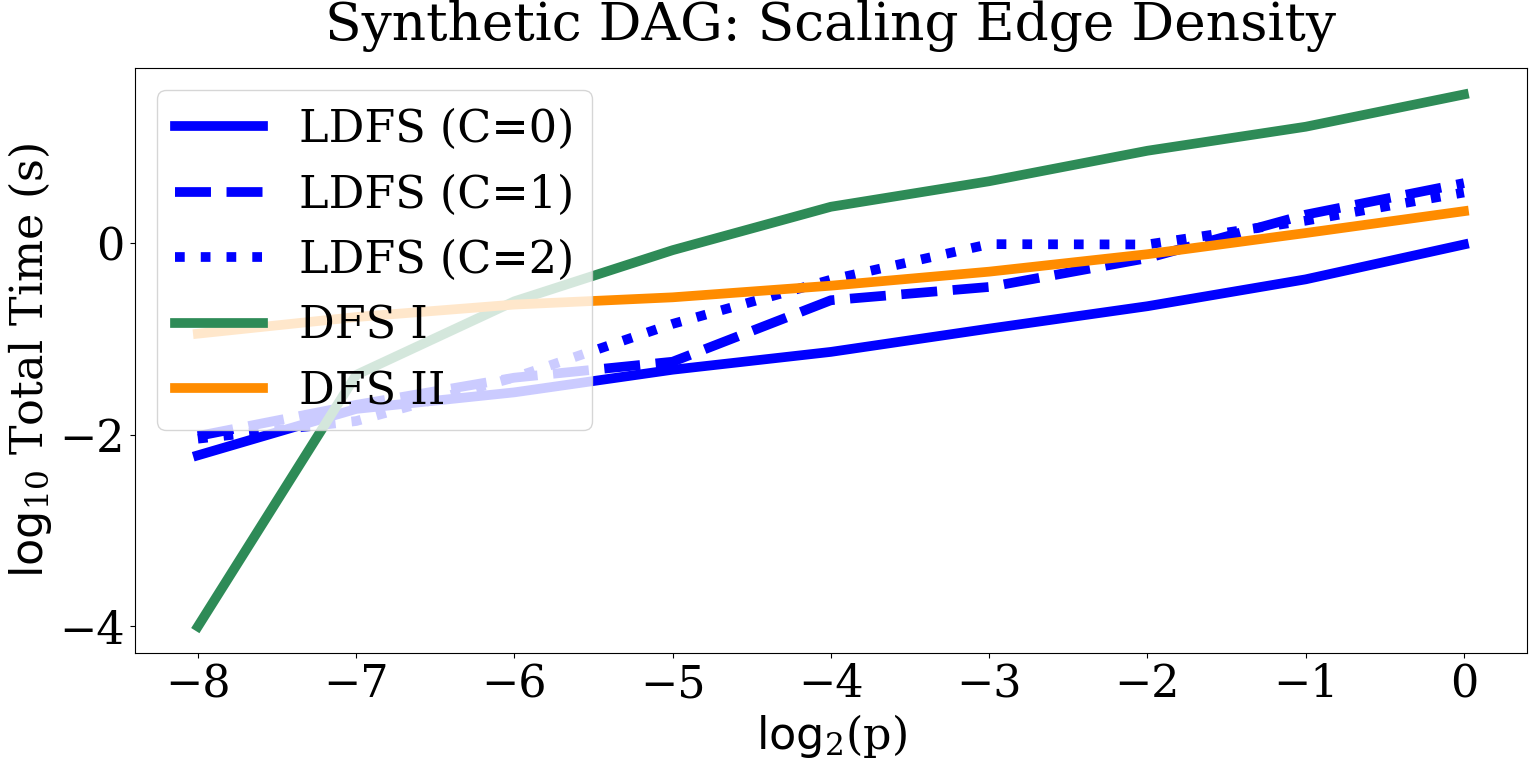}
        \vspace*{-0.2in}
        \caption{}
    \label{fig:DAGtime}
    \end{subfigure}
    \vspace*{-0.3in}
    \caption{Performance comparison for different edge densities on synthetic DAGs (in logarithmic scale). The number of nodes is $n=1000$, and we increase $p$ in the x-axis (in logarithmic scale).  We use the first 5\% of the input as the training data for LDFS (our algorithm), and the last 95\% is used as the test data for all the algorithms. The blue lines correspond to the results for LDFS, with different amounts of perturbation added to the predictions.  The perturbation is a normal noise with mean 0 and standard deviation $C.\text{SD(predictions)}$ that is independently added to each prediction, where SD(predictions) is the standard deviation of the initial predictions.  We include the results for $C=0,1,2$.  The blue lines are the average of 5 different runs, each time regenerating the noise.  Figures~\ref{fig:DAGcost} and~\ref{fig:DAGtime} illustrate the cost (number of nodes and edges processed) and the runtime of these experiments, respectively.}
    \label{fig:DAG}
\end{figure}

\paragraph{Experimental Setup and Results.} We use Python 3.10 on a machine with 11th Gen Intel Core i7 CPU 2.80GHz, 32GB of RAM, 128GB NVMe KIOXIA disk drive, and 64-bit Windows 10 Enterprise to run our experiments. Note that the cost of the algorithms, i.e., the total number of edges and nodes processed, is hardware-independent. 

The datasets we use might include duplicate arcs, but both our algorithm and the baselines skip duplicate edges, both in the training phase and the test phase. 
To check if an arc already exists in the graph, we use the set data structure in Python, which has an average time complexity of $O(1)$ for the operations that we use. 

We use a random permutation of the nodes for the initial levels of the nodes in the DFS II algorithm. For all the experiments on this algorithm, we regenerate this permutation 5 times and report the average of these runs. 

To generate the synthetic DAGs for the experiments on the edge density, we set $V=\{1,\ldots,n\}$, and for each $1 \leq u < v \leq n$, we sample the edge $(u,v)$ independently at random with some (constant) probability $p$.  
We randomly permute the edges to obtain the sequence of inserts. 

Figure~\ref{fig:DAG} compares the performance of LDFS and the two baselines on synthetic DAGs with $n=1000$ nodes and different edge densities. Only the first 5\% of the data is used as the training set for LDFS, and the rest is used as the test set. We also show the effect of adding a huge perturbation to the predictions.  Importantly, we show that the quality of predictions is essential to our algorithm's performance: for very dense graphs, and sufficient additional noise added to the predictions, our algorithm's performance degrades to be worse than the baseline.

\begin{figure}[h]
    \centering
    \begin{subfigure}{\linewidth}
        \centering \includegraphics[width=\linewidth, height=4cm]{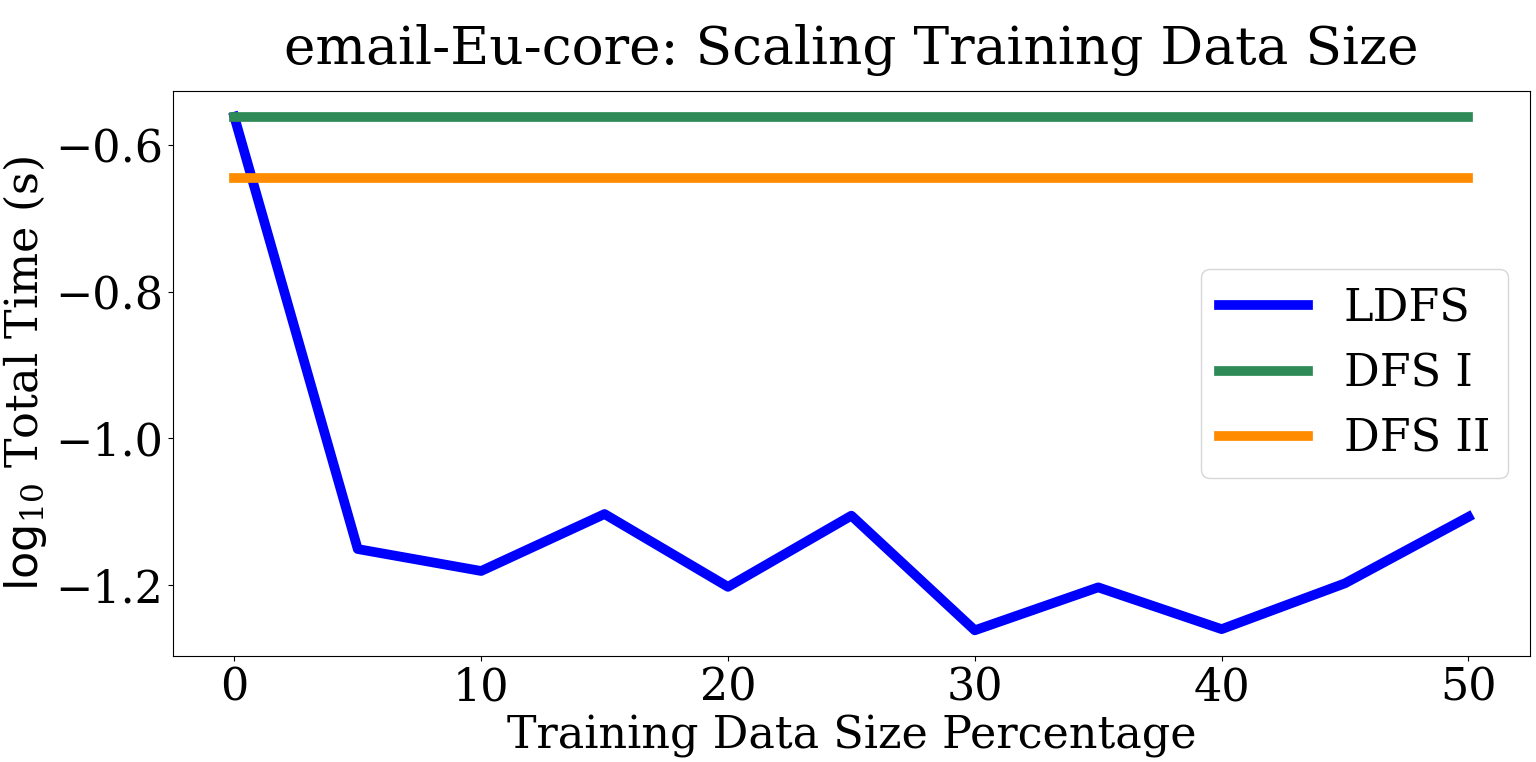}
        \vspace*{-0.25in}
        \caption{}
    \label{fig:email-Eu-core-scale-training-set-time}
    \end{subfigure}
    \par\smallskip
    \begin{subfigure}{\linewidth}
        \centering \includegraphics[width=\linewidth, height=4cm]{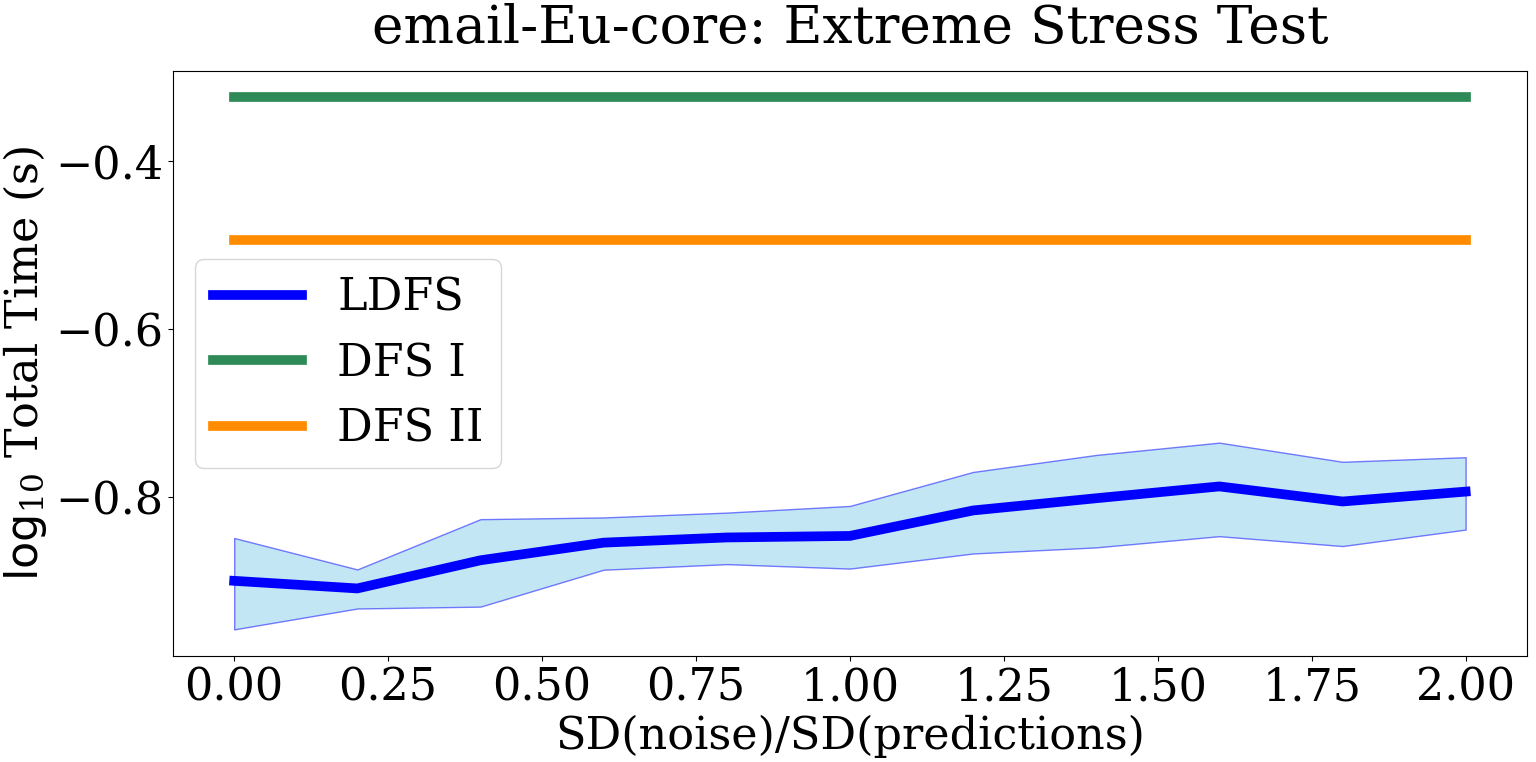}
        \vspace*{-0.25in}
        \caption{}
    \label{fig:email-Eu-core-robustness-time}
    \end{subfigure}%
    \caption{email-Eu-core}%
    \label{fig:email-Eu-core-time}
\end{figure}

In Figure~\ref{fig:email-Eu-core-time}, we show the runtime plots for email-Eu-core. The setup is the same as that of Figure~\ref{fig:email-Eu-core}, except that here we measure the runtime instead of the cost. Figures~\ref{fig:CollegeMsg} and~\ref{fig:MathOverflow} show the same experiments for the other two datasets in Table~\ref{table:dataset features}.

\begin{figure}[h!]
    \centering
    \begin{subfigure}{\linewidth}
        \centering  \includegraphics[width=\linewidth, height=4cm]{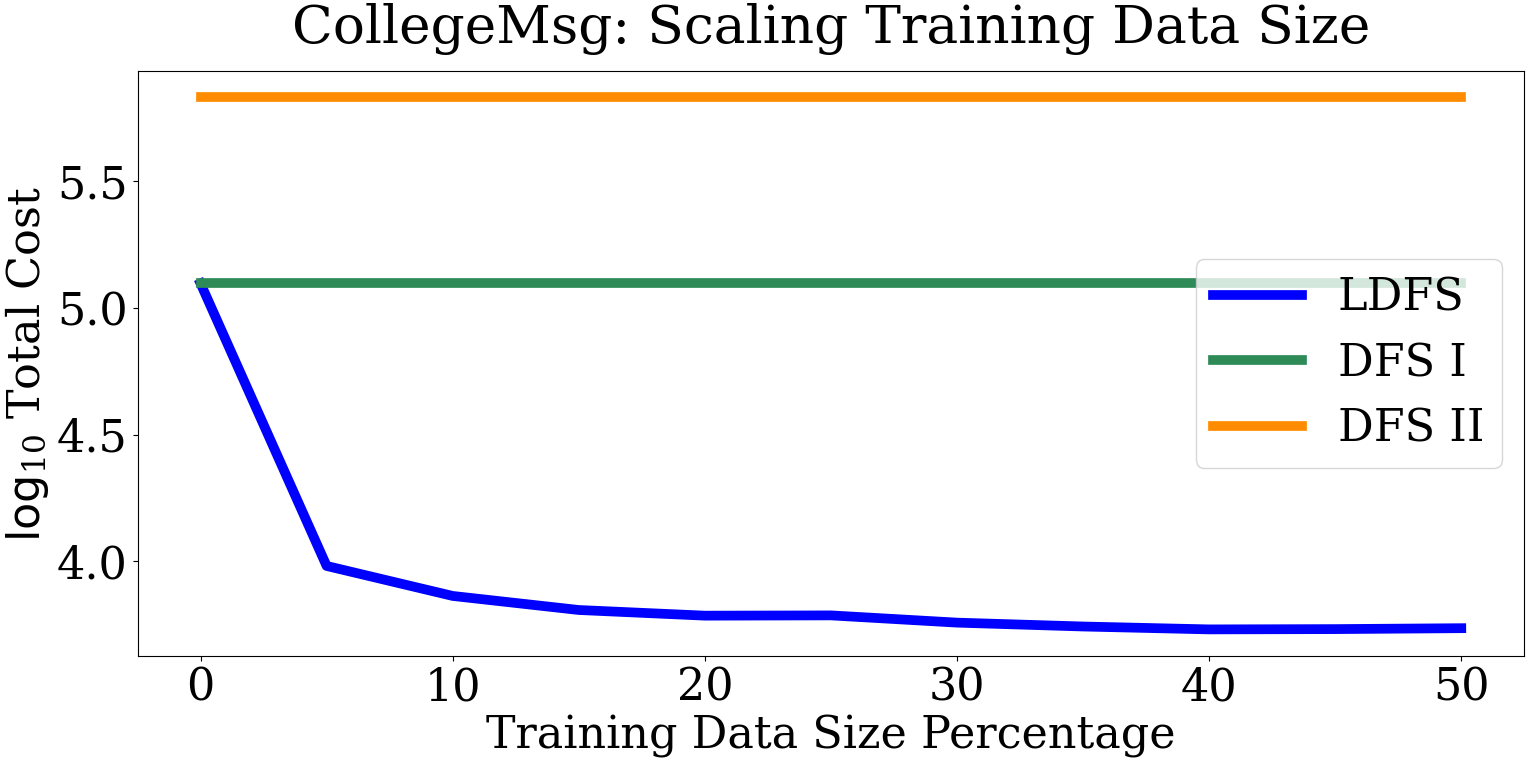}
        \vspace*{-0.2in}
        \caption{}
        \label{fig:CollegeMsg scale training set cost}
    \end{subfigure}
    \par\smallskip
    \begin{subfigure}{\linewidth}
        \centering \includegraphics[width=\linewidth, height=4cm]{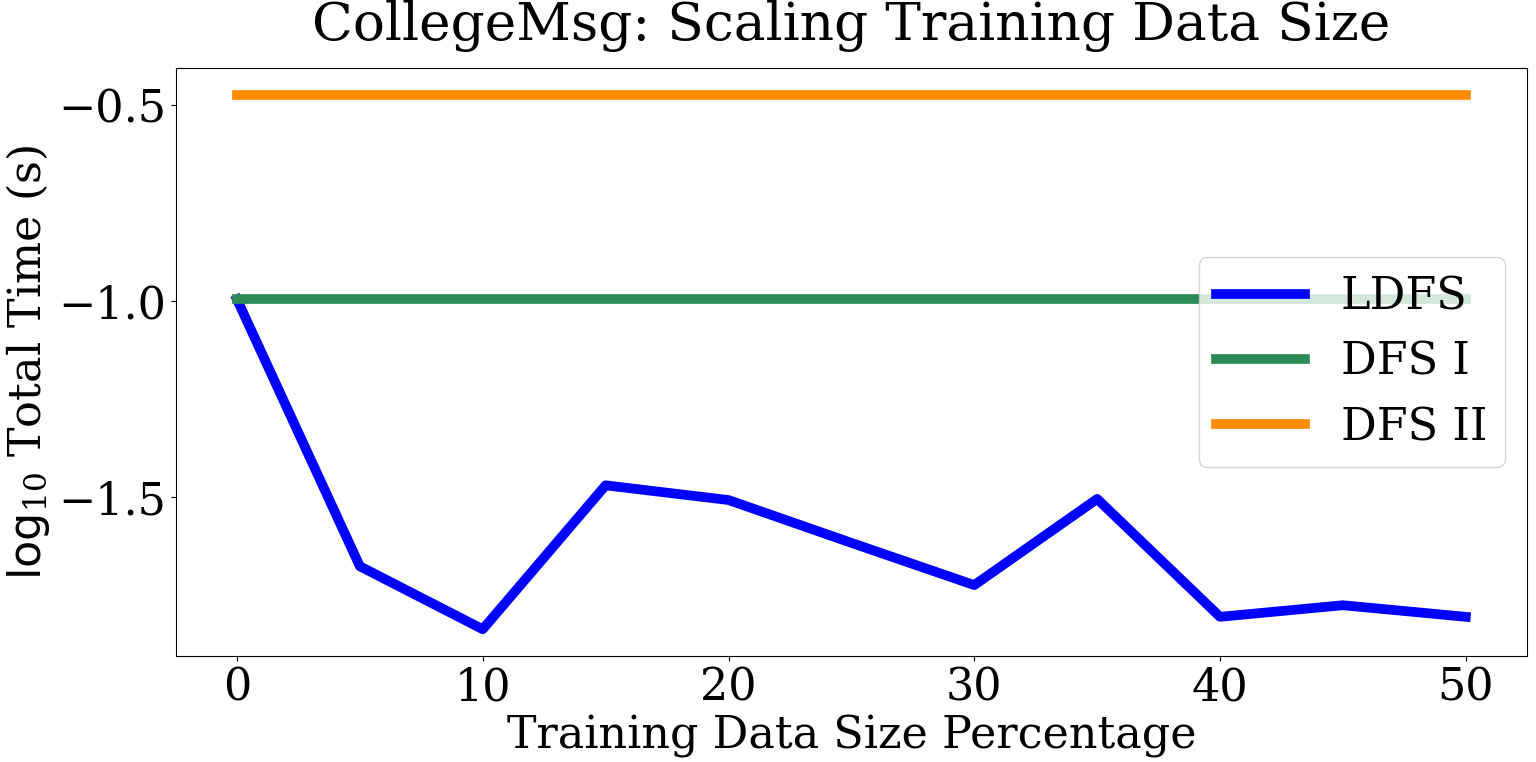}
        \vspace*{-0.2in}
        \caption{}
    \label{fig:CollegeMsg scale training set time}
    \end{subfigure}
    \par\smallskip
    \begin{subfigure}{\linewidth}
        \centering \includegraphics[width=\linewidth, height=4cm]{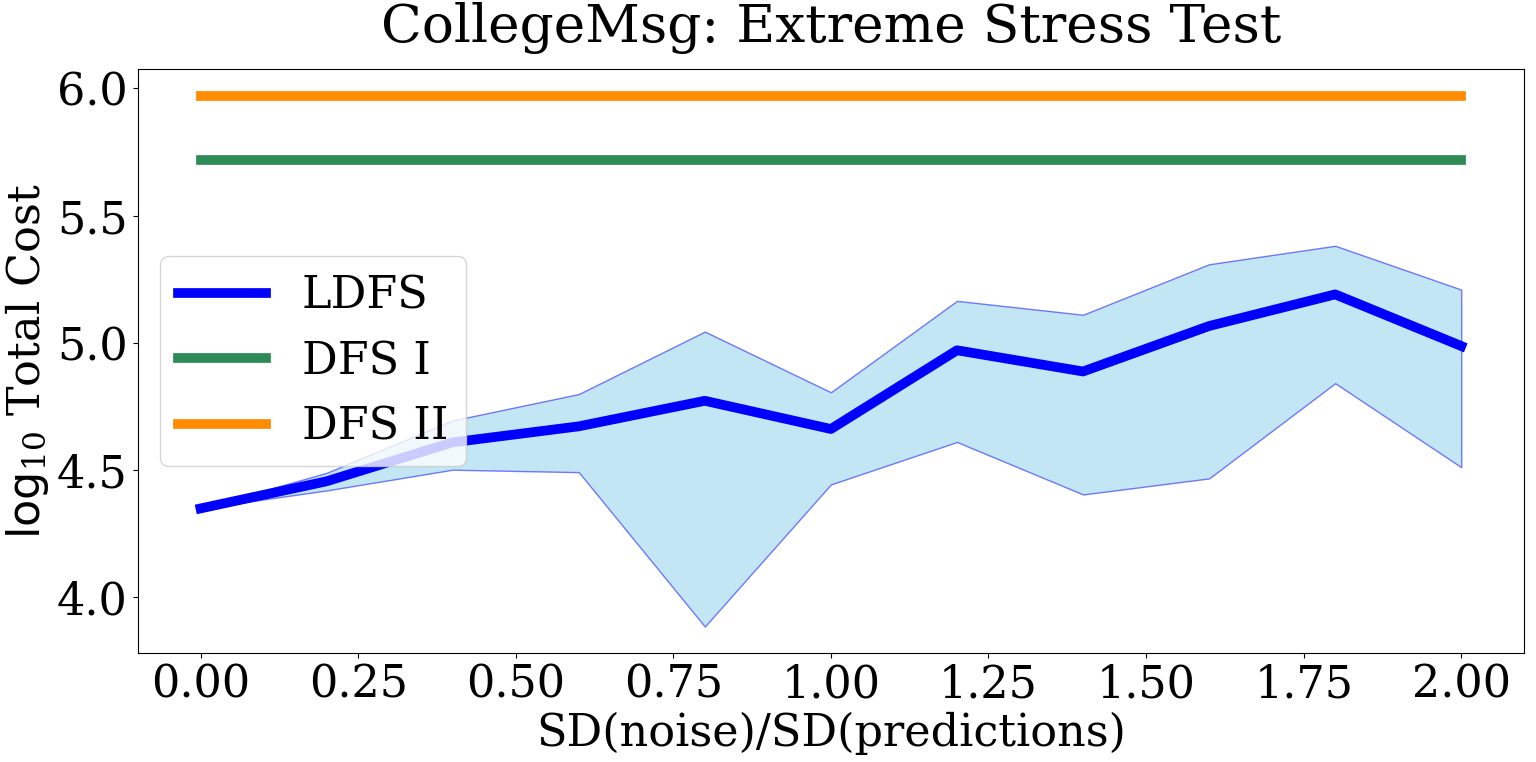}
        \vspace*{-0.2in}
        \caption{}
    \label{fig:CollegeMsg robustness cost}
    \end{subfigure}
    \par\smallskip
    \begin{subfigure}{\linewidth}
        \centering \includegraphics[width=\linewidth, height=4cm]{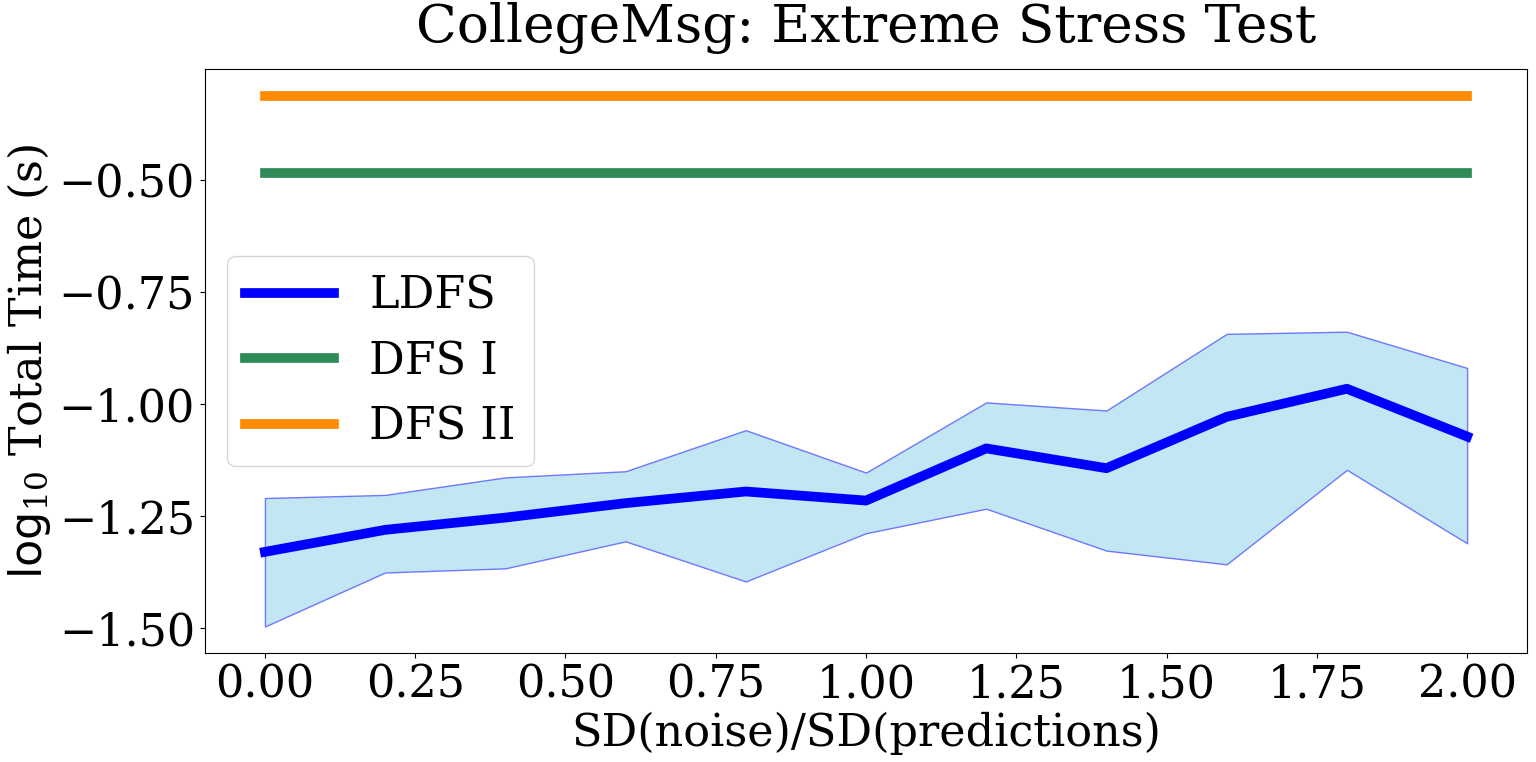}
        \vspace{-0.2in}
        \caption{}
    \label{fig:CollegeMsg robustness time}
    \end{subfigure}
    
    \caption{CollegeMsg}
    \label{fig:CollegeMsg}
\end{figure}

\paragraph{Discussion.} Figure~\ref{fig:DAG} suggests that our algorithm (without perturbation) outperforms the baselines, even for very dense DAGs (note that the last point in the x-axis corresponds to $p=1$, which means that the DAG is complete). However, as the edge density of the DAG increases, the gap between our algorithm and DFS II decreases. Also for high densities and high perturbations, our algorithm still performs reasonably compared to other baselines in terms of cost (which is the main focus of the paper). Another observation is that LDFS is more robust to perturbation on sparse graphs. 
Finally, Figures~\ref{fig:email-Eu-core-time},~\ref{fig:CollegeMsg}, and~\ref{fig:MathOverflow} further support our conclusions in Section~\ref{sec:experiments}.

\begin{figure}[h!]
    \centering
    \begin{subfigure}{\linewidth}
        \centering  \includegraphics[width=\linewidth, height=4cm]{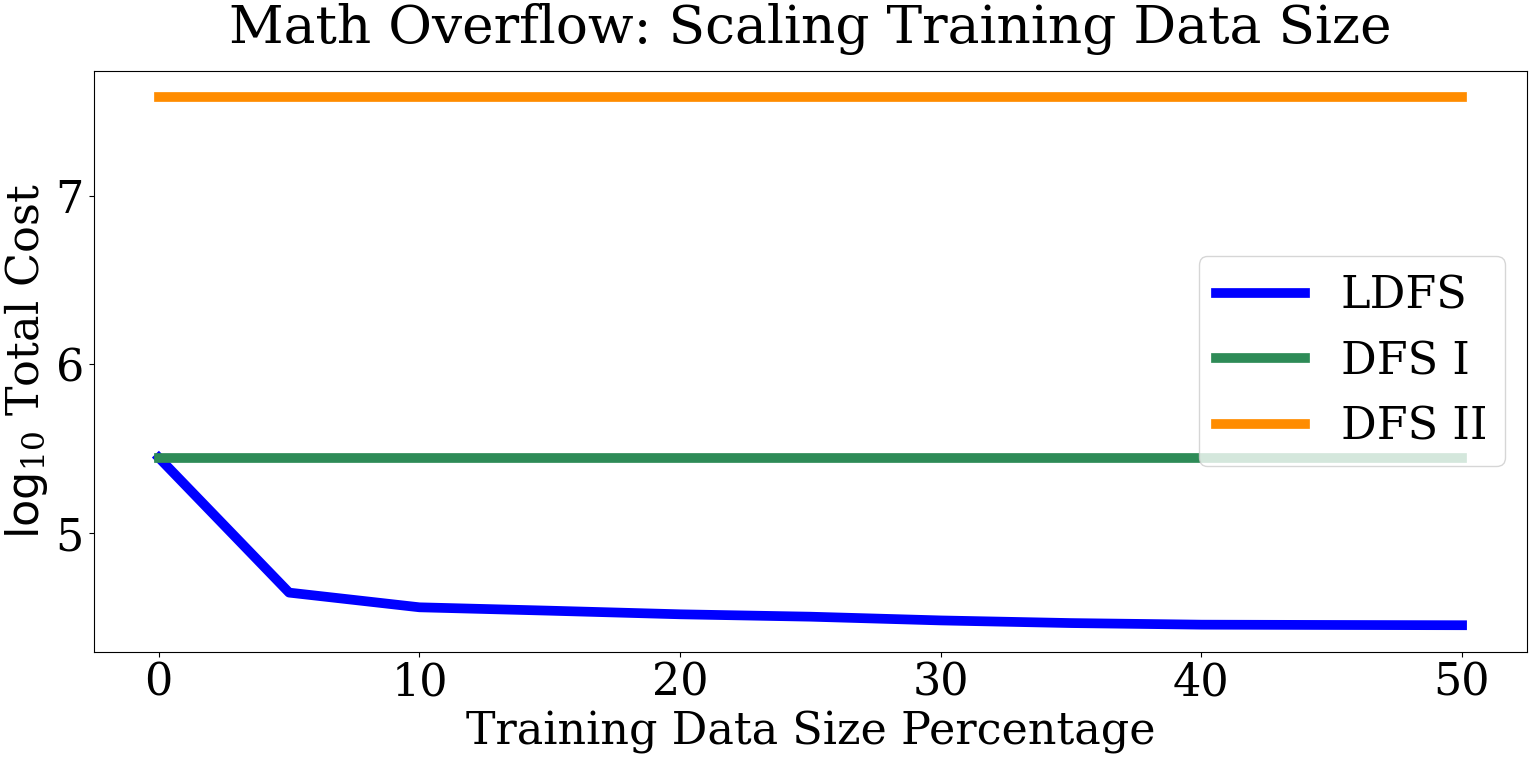}
        \caption{}
        \label{fig:MathOverflow scale training set cost}
    \end{subfigure}
    \par\smallskip
    \begin{subfigure}{\linewidth}
        \centering \includegraphics[width=\linewidth, height=4cm]{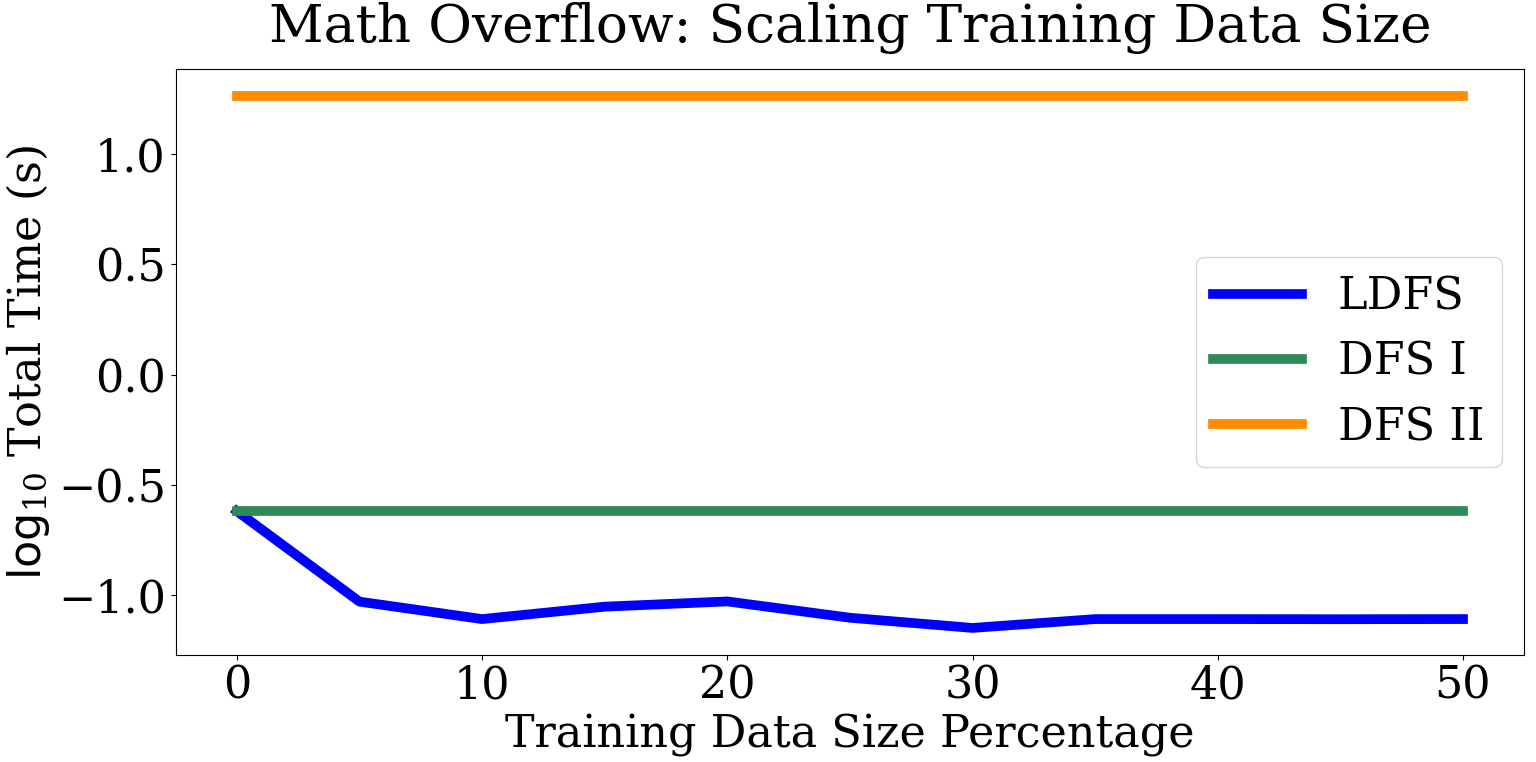}
        \caption{}
    \label{fig:MathOverflow scale training set time}
    \end{subfigure}
    \par\smallskip
    \begin{subfigure}{\linewidth}
        \centering \includegraphics[width=\linewidth, height=4cm]{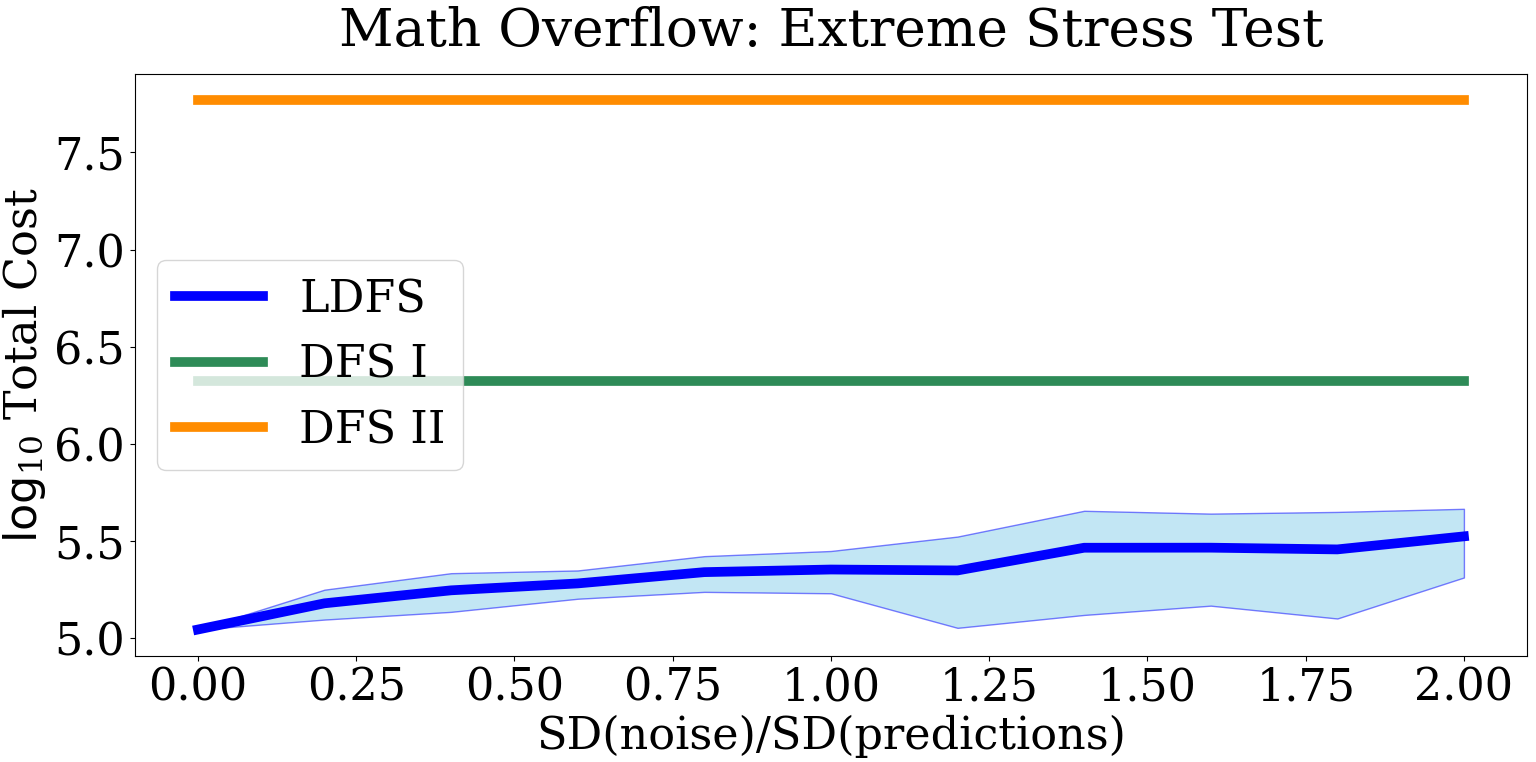}
        \caption{}
    \label{fig:MathOverflow robustness cost}
    \end{subfigure}
    \par\smallskip
    \begin{subfigure}{\linewidth}
        \centering \includegraphics[width=\linewidth, height=4cm]{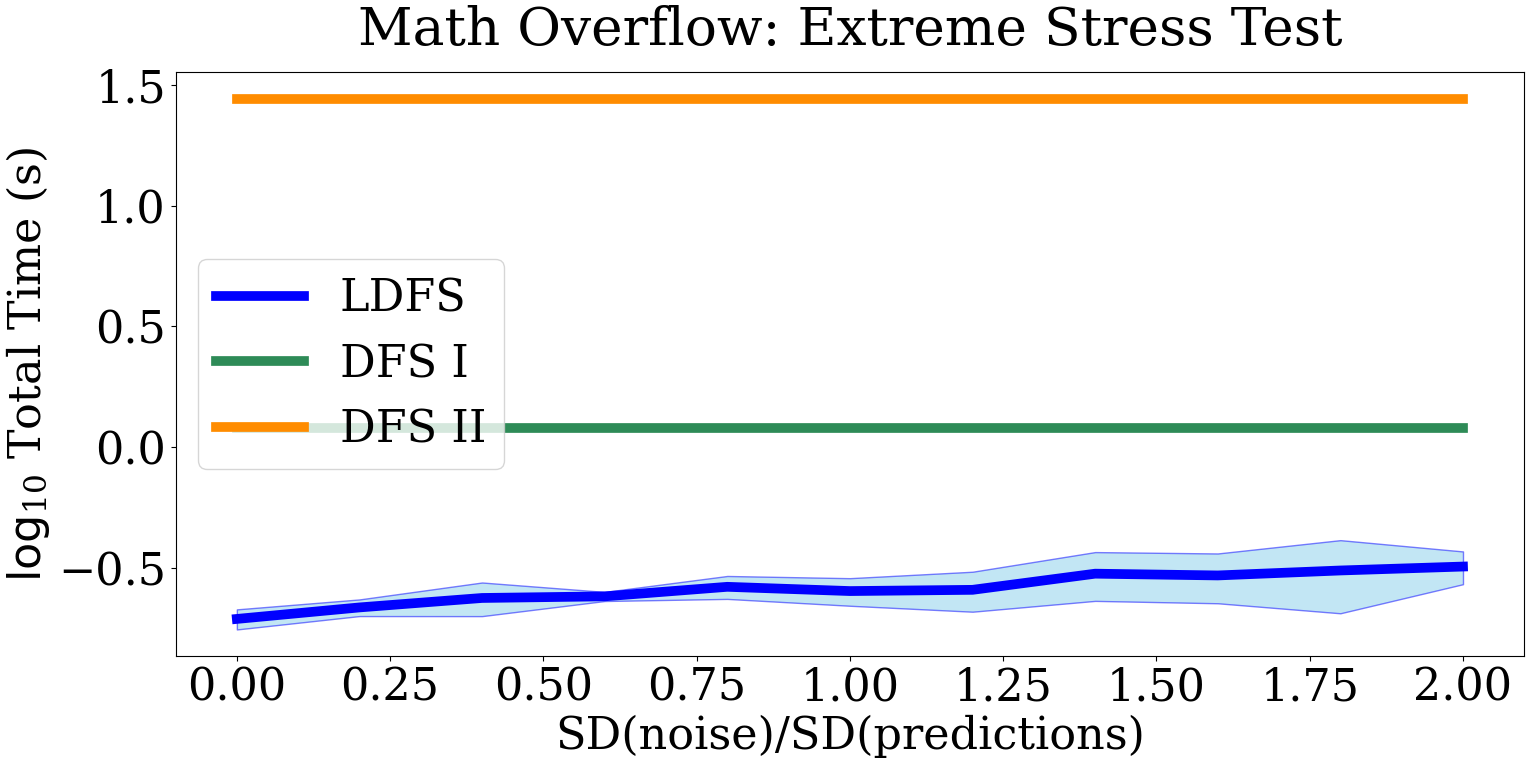}
        \caption{}
    \label{fig:MathOverflow robustness time}
    \end{subfigure}
    \caption{Math Overflow}
    \label{fig:MathOverflow}
\end{figure}

\onecolumn

\end{document}